\def\Z{{\mathbb Z}}
\def\P{{\cal P}}
\def\R{{\mathbb R}}
\def\T{{\cal T}}
\def\ENERGY{\mbox{\textbf{ENERGY}}}
\def\MARGINAL{\mbox{\textbf{MARGINAL}}}
\def\be{\begin{equation}}
\def\ee{\end{equation}}
\newtheorem{theo}{Theorem}
\newtheorem{prop}[theo]{Proposition}
\newtheorem{lemma}{Lemma}
\newtheorem{defin}{Definition}
\begin{document}

\title{Two Dimensional Translation-Invariant Probability Distributions: \\Approximations, Characterizations and No-Go Theorems}
\author{Zizhu Wang$^{2,1}$, Miguel Navascu\'es$^{1}$}
\address{$^{1}$ Institute for Quantum Optics and Quantum Information (IQOQI) Vienna, \\Austrian Academy of Sciences,\\ Boltzmanngasse 3, 1090 Vienna, Austria\\
$^{2}$ Institute of Fundamental and Frontier Sciences, \\University of Electronic Science and Technology of China,\\ Chengdu 610054, China
}
\corres{Zizhu Wang and Miguel Navascu\'es\\
\email{\{zizhu.wang,miguel.navascues\}@oeaw.ac.at}}

\begin{abstract}
We study the properties of the set of marginal distributions of infinite translation-invariant systems in the 2D square lattice. In cases where the local variables can only take a small number $d$ of possible values, we completely solve the marginal or membership problem for nearest-neighbors distributions ($d=2,3$) and nearest and next-to-nearest neighbors distributions ($d=2$). Remarkably, all these sets form convex polytopes in probability space. This allows us to devise an algorithm to compute the minimum energy per site of any TI Hamiltonian in these scenarios exactly. We also devise a simple algorithm to approximate the minimum energy per site up to arbitrary accuracy for the cases not covered above. For variables of a higher (but finite) dimensionality, we prove two no-go results. To begin, the exact computation of the energy per site of arbitrary TI Hamiltonians with only nearest-neighbor interactions is an undecidable problem. In addition, in scenarios with $d\geq 2947$, the boundary of the set of nearest-neighbor marginal distributions contains both flat and smoothly curved surfaces and the set itself is not semi-algebraic. This implies, in particular, that it cannot be characterized via semidefinite programming, even if we allow the input of the program to include polynomials of nearest-neighbor probabilities.
\end{abstract}

\maketitle

\section{Introduction}
The distribution of stars at large scales, the vacuum state of a quantum field theory, the thermal state of any solvable spin model: these are examples of systems with infinitely many sites where the description of a bounded environment does not depend on its location within the lattice. We call this property translation invariance (TI). 

An agent exploring an infinite translation-invariant world would find that the statistics of the local random variables which he can access are constrained by the requirement of infinite TI. However, and despite a long history of research on TI systems, driven by the needs of statistical physics (see, e.g.~\cite{baxter}), it is far from clear what those constraints exactly are.

This conundrum is at the essence of the \emph{TI $\MARGINAL$ problem}, where a number of probability distributions of finitely-many variables are provided and the task is to certify if they correspond to the marginals of a TI system. $\MARGINAL$ arises naturally at the intersection of quantum information science and condensed matter physics, when we try to determine whether the dynamical structure factors of a large spin system are compatible with an underlying Bell local quantum state~\cite{ours}.

$\ENERGY$, the dual problem of computing the minimum energy per site of a translation-invariant Hamiltonian, is also of special concern for the mathematical physics community. While $\ENERGY$ is efficiently solvable in one dimensional ($1D$) systems~\cite{Schlijper1985,Pivato,Goldstein2017}, in $2D$ very few solvable instances are documented.

What do we know about $\MARGINAL$ or $\ENERGY$ in $2D$? Not much. We know that the set of TI marginals is closed and convex~\cite{chazottes,Goldstein2017}. We also know that that the problem of determining the existence of near-neighbor marginals of a TI distribution with a given (finite) support is undecidable~\cite{ruelle1978,SchlijperTiling,Pivato}. In ~\cite{kuna2011} the case where the random variables are dichotomic is studied and necessary and sufficient conditions are provided for the existence of a TI extension, given the \emph{infinitely many} marginal distributions of any $n$ lattice sites. In~\cite{Goldstein2017} explicit examples of probability distributions for lattice sites within a $2\times 2$ square are given which, despite satisfying all local symmetries associated to TI, do not admit $2D$ TI extensions. Similar instances of $2D$ non-extendible distributions constructed from $3\times 3$ squares satisfying additional rotation and reflection symmetries are also provided. In~\cite{chazottes} the approximability of the $\MARGINAL$ problem via convex polytopes is considered.

It is still open under which scenarios one can solve $\MARGINAL$ and $\ENERGY$ exactly, or how to attack these problems numerically~\footnote{Brute-force computational methods to decide Bell nonlocality in the simplest $2D$ TI scenarios or to compute lower bounds on ground state energy densities soon become intractable due to lack of computer memory.}. Even the geometry of the set of all marginal distributions of $2D$ TI systems remains a mystery: in $1D$, the marginal distributions of the variables of finitely many sites of TI systems form a convex polytope in the space of probabilities~\footnote{A convex polytope is a convex set defined by a finite number of linear inequalities~\cite{grunbaum2003convex,ziegler1995lectures}.}. Should we expect this to hold in $2D$ systems as well?

Motivated by the desire of understanding the nature of quantum nonlocality in $2D$ materials \cite{ours}, in this paper we will considerably advance the above fundamental questions. We will show that, in scenarios where the random variables take a small number $d$ of possible values, the $\MARGINAL$ problem for nearest-neighbor (for $d=2,3$) and next-to-nearest neighbor distributions (for $d=2$) is exactly solvable. We also prove solvable a natural variant of $\MARGINAL$ where the TI extension is also required to satisfy invariance under reflection of the horizontal and vertical axes. An immediate consequence of these results is that, in any of the above cases, it is possible to solve $\ENERGY$ exactly. For the scenarios not covered above, we provide an algorithm to approximate the set of TI marginals or solve the $\ENERGY$ problem up to arbitrary precision.

On the negative side, we show that, for arbitrary TI Hamiltonians, $\ENERGY$ is undecidable. We also prove that, contrary to the solvable cases, for local random variables of high cardinality, the set of nearest-neighbor distributions admitting a TI extension is no longer a convex polytope in the space of probabilities or even a semi-algebraic set\footnote{A semi-algebraic set is the union of finitely many regions of $\R^n$ defined by a finite number of polynomial inequalities~\cite{Bochnak1998}.}. This implies, in particular, that standard tools from convex optimization, such as linear programming \cite{linear} or semidefinite programming \cite{sdp} cannot be used to fully characterize Bell nonlocality in large $2D$ condensed matter systems.

The structure of this paper is as follows: in Section \ref{basic_prop} we will define and list known properties of the set of marginal 2D TI distributions. We will introduce the problems $\MARGINAL$ and $\ENERGY$, and show how to solve their 1D versions. We will also describe a simple symmetrization process that will play an important role in the mathematical proofs to come. The remaining sections present our original contributions. In Section \ref{properties} we will present a practical algorithm to characterize the set of 2D marginals up to arbitrary accuracy. Later, in Section \ref{positive}, we will provide a few instances of the marginal problem which are exactly solvable. In Section \ref{negative} we will prove two no-go theorems about the set of 2D TI marginals, namely: (a) the problem of computing $\ENERGY$ exactly is undecidable; (b) for random variables with support $d$ greater than or equal to $2947$, the set of nearest-neighbor distributions is not a semi-algebraic set.

\section{The set of marginal 2D TI distributions: definition and known properties}
\label{basic_prop}
Consider an infinite two-dimensional square lattice, and suppose that each site $(x,y)\in\Z^2$ has access to a local random variable $a_{x,y}$ that can take $d<\infty$ possible values. We will call $d$ the \emph{local dimension} of the lattice. For any finite subset $K$ of $\Z^2$, we will assume that the variables $a_K\equiv\{a_z:z\in K\}$ follow a probability distribution $P_K(a_K)$. We say that the system is translation-invariant (TI) if, for any finite set $K$ and any vector $c\in\Z^2$, 

\be
P_{K}=P_{K+c},
\label{TI_def}
\ee
\noindent where $K+c=\{z: z-c\in K\}$.

TI is a very common property in nature. It is satisfied approximately in large crystal structures, and exactly by any stationary state in quantum field theory. In addition, most exactly solvable models in statistical physics comply with this symmetry in the thermodynamic limit of infinitely many sites.

In this paper we will consider a scenario where an agent conducts local observations within an infinite 2D TI system. Intuitively, the condition of TI means that the statistics of the variables within the immediate vicinity of the agent do not give any clue about his position in the lattice. Our goal is to determine how the statistics of such a sample are constrained by the requirement that these variables arise from a 2D TI system.

\begin{figure}
  \centering
  \includegraphics[width=0.6\textwidth]{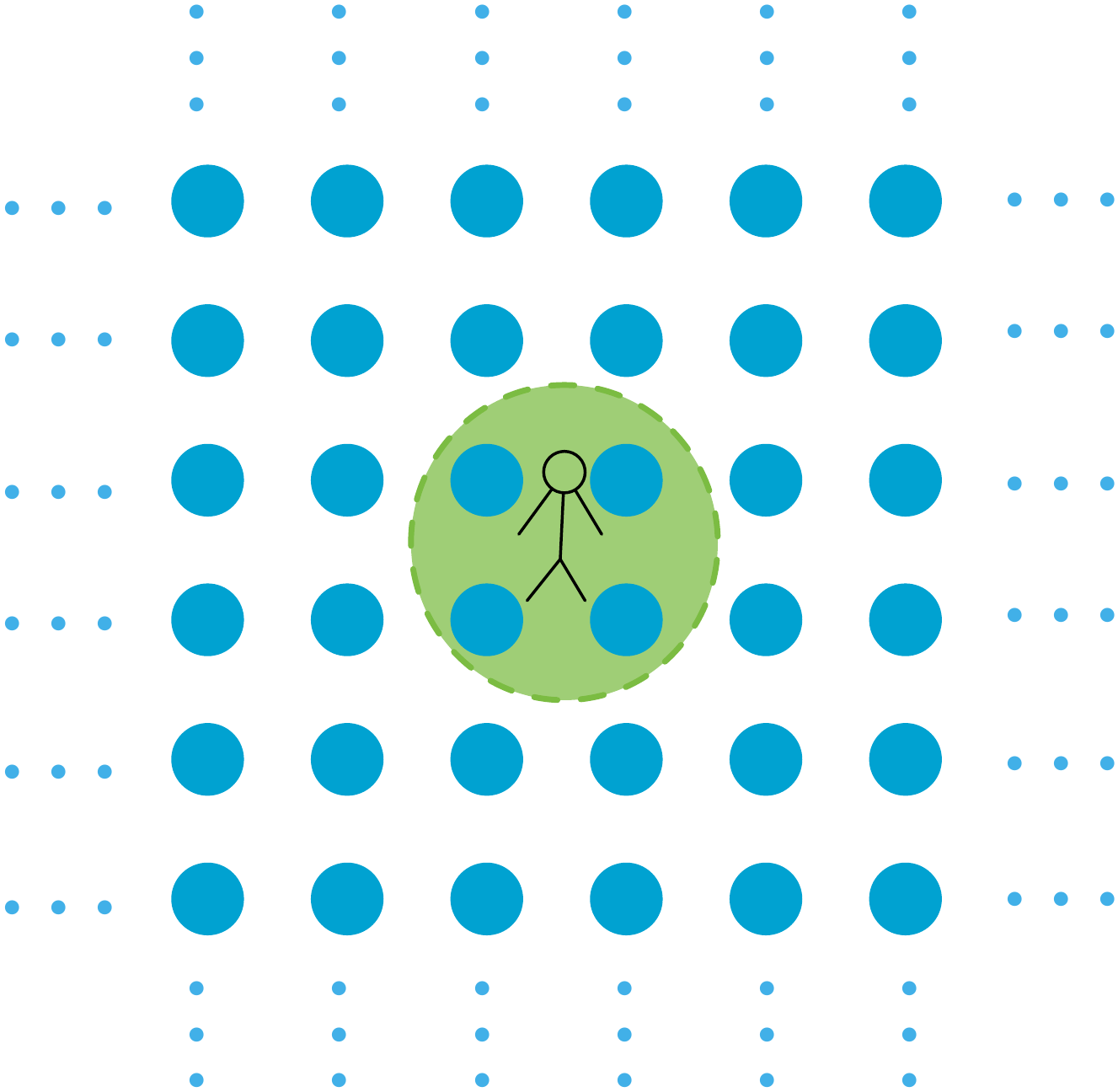}
  \caption{\textbf{An agent exploring his surroundings in an infinite TI lattice.} In the picture, the agent only has access to the random variables corresponding to four nearby sites. How are the statistics of the corresponding four random variables restricted by the TI condition?}
  \label{little_man}
\end{figure}

With a slight abuse of notation, any function $P$ assigning probability distributions $P_K(a_K)$ to any finite set $K\subset \Z^2$ satisfying eq. (\ref{TI_def}) and the consistency conditions $\sum_{a_{L}}P_{K\cup L}(a_K,a_{L})=P_K(a_K)$ for $K\cap L=\emptyset$ will be called a \emph{2D TI probability distribution}. Each of the distributions $P_K(a_K)$ will be denoted the \emph{marginal} of a 2D TI distribution, or simply a TI marginal. A few special finite subsets of $\Z^2$ will appear frequently throughout this article, so we will need special names for them. $v,h,+,-$, will denote, respectively, the sets $\{(0,0),(0,-1)\}$, $\{(0,0),(1,0)\}$, $\{(0,0),(1,1)\}$, $\{(0,0),(1,-1)\}$. The bi-variate distributions $P_h(a,b)$, $P_v(a,b)$ ($P_+(a,b)$, $P_-(a,b)$) hence describe the statistics between nearest neighbors (next-to-nearest neighbors). We will also use $m\times n$ to describe the rectangle $\{(x,y): 0\leq x< m,0\leq y< n\}$; and $|m\times n|$, for $\{(x,y): |x|< m,|y|< n\}$. Finally, $1,...,n$ will denote the set $\cup_{k=1}^n\{(k,0)\}$.

We will now introduce the two problems which will concern us through the rest of the article.

\begin{defin}[The TI $\MARGINAL$ problem]

Given a finite number of finite subsets of $\Z^2$, i.e., $\{K_i\}_{i=1}^n$, $|\cup_iK_i|<\infty$, and the probability distributions $\{Q_i(a_{K_i})\}_i$, determine if there exists a 2D TI distribution $P$ such that $P_{K_i}(a_{K_i})=Q(a_{K_i})$ for $i=1,...,n$. In the affirmative case, the distribution $P$ will be called a \emph{TI extension} of $\{Q_i(a_{K_i})\}_i$.
\end{defin}

\begin{figure}
  \centering
  \includegraphics[width=0.6\textwidth]{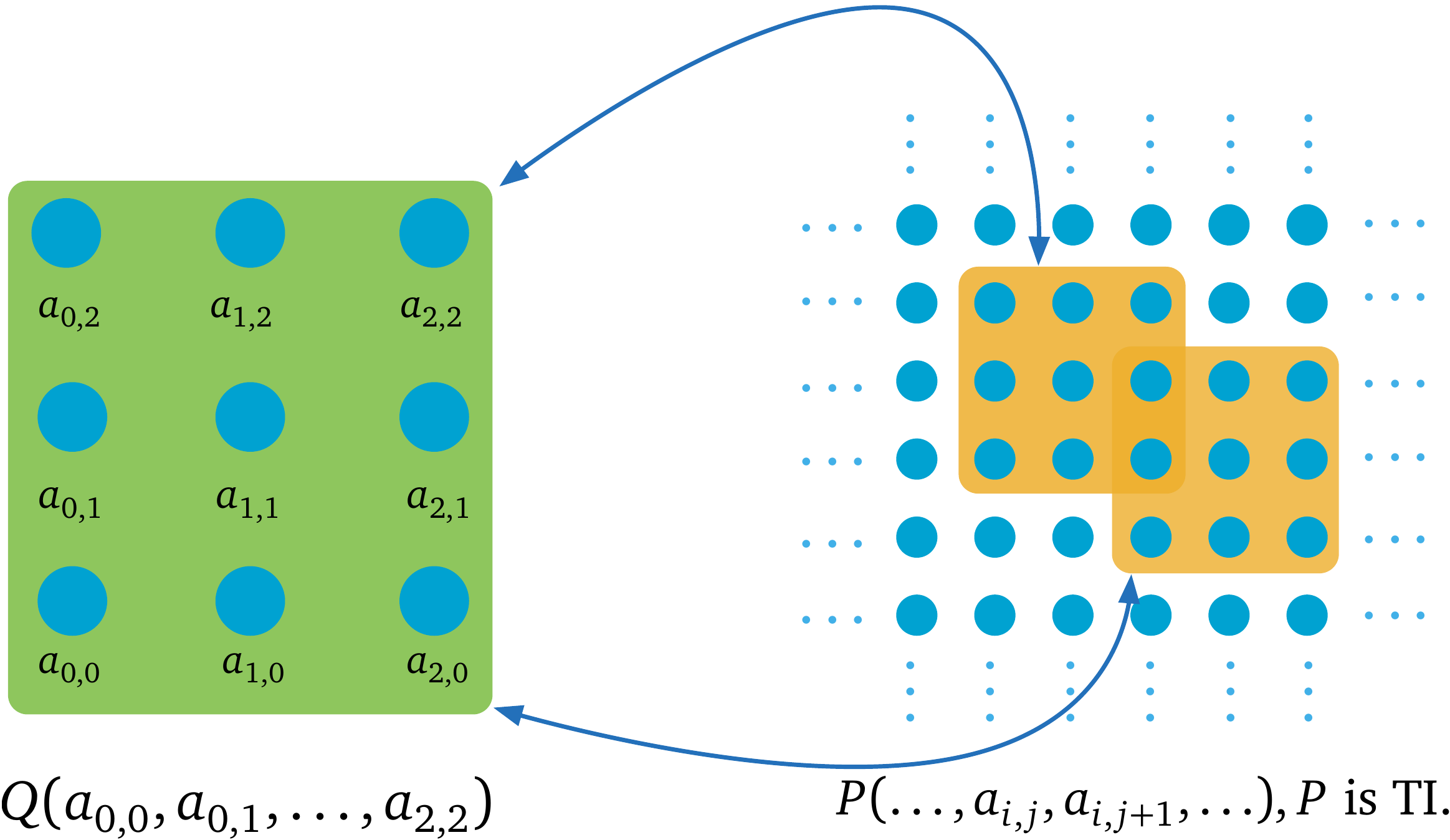}
  \caption{\textbf{The 2D TI marginal problem.} In the example, there is only one region, $K=3\times 3$. A probability distribution $Q_{3\times 3}$ for the $9$ variables $\{a_{i,j}:i,j=0,1,2\}$ is given (the green square), and the question is whether there exists a 2D TI system with a marginal probability distribution $P_{3\times 3}$ (the yellow squares) equal to $Q_{3\times 3}$.}
  \label{marginal}
\end{figure}

In the following, each configuration of cardinality $d<\infty$ and finite subsets of $\Z^2$ $\{K_i\}_{i=1}^n$ will be called \emph{a scenario}. In any scenario, the set of TI marginals is convex and closed \cite{Goldstein2017}.

Sometimes, e.g., in statistical physics, we are not interested in solving the marginal problem as much as in optimizing linear functionals over the set of marginals of 2D TI distributions. This motivates the following problem:

\begin{defin}[The $\ENERGY$ problem]

Given $\{K_i\}_{i=1}^n$, with $K_i\subset \Z^2$, $|\cup_iK_i|<\infty$, and functions $F_i:\{0,\ldots,d-1\}^{|K_i|}\to\R$, $i=1,...,n$, solve the optimization problem

\be
\min_{P}\sum_{i=1}^n\sum_{a_{K_i}}F_i(a_{K_i})P_{K_i}(a_{K_i}),
\label{energy_functional}
\ee
\noindent where the minimization is carried out over all TI distributions $P$.
\end{defin}

The $\ENERGY$ problem is the dual of the $\MARGINAL$ problem: given an oracle to solve one of them approximately, one can devise an algorithm that invokes the oracle a polynomial number of times in order to solve the other problem with a similar accuracy \cite{grotschel+}. Given $\{K_i\}_{i=1}^n$ and the interactions $\{F_i\}_{i=1}^n$, consider the family of (classical) 2D Hamiltonians of the form

\be
H_N(a_{N\times N})=\sum_{\{z: K_i+z\subset N\times N,\forall i\}}\sum_{i=1}^n\sum_{b=1}^{d^{|K_i|}}F_i(b)\delta(a_{K_i+z},b),
\ee

\noindent where each Hamiltonian $H_N$ describes the finite lattice $N\times N$. The \emph{minimum energy per site} of the infinite lattice is defined as $\lim_{N\to\infty} \frac{1}{N^2}\min_{a_{N\times N}}H_N(a_{N\times N})$. $\ENERGY$ takes its name from the observation that the minimum energy per site of the infinite 2D TI Hamiltonian defined by the terms $\{F_i\}_{i=1}^n$ equals $\ENERGY(\{F_i\})$. The reader can find a proof at the end of this section.

As we will see, the TI marginal problem is very hard. However, there exists an interesting variant whose solution turns out to be trivial.

\begin{defin}[The $\MARGINAL$ problem in 1D]

Let $Q_{1,...,s}(a_1,...,a_s)$ be a probability distribution. Determine if there exists a TI distribution $P$ such that $P_{1,...,s}(a_1,...,a_s)=Q_{1,...,s}(a_1,...,a_s)$.
\end{defin}

The solution of this problem is folklore among the community of condensed matter physicists: a distribution $Q_{1,...,s}(a_1,...,a_s)$ admits a TI extension iff $Q_{1,...,s-1}=Q_{2,...,s}$ \cite{Schlijper1985,Pivato,Goldstein2017}.

\vspace{10pt}

Given an arbitrary distribution $Q_{K}$ (not necessarily a TI marginal) of local variables over a large rectangle $K=m\times n$, one can derive a 2D TI distribution $P$ such that $P_{s\times t}$ equals a spatial average of all the $s\times t$ possible rectangles inside $K$ with a correction of the order $O\left(\frac{\max(s,t)}{\min(m,n)}\right)$. 

\begin{defin}[The symmetrization procedure]

Given $Q_K$, define a distribution $\hat{P}$ over the whole plane by tiling it with copies of the distribution $Q_{K}$. That is, for $K(i,j)\equiv K+(mi,n,j)$, $\hat{P}_{\cup_{ij} K(i,j)}(a_{\cup_{ij} K(i,j)})\equiv\prod_{i,j} Q_K(a_{K(i,j)})$. Then, the TI distribution $P$ given by

\be
P_{K'}\equiv\frac{1}{mn}\sum_{x=1}^n\sum_{y=1}^m\hat{P}_{K'+(x,y)}
\ee

\noindent will be called the \emph{symmetrization} of $Q$.

\end{defin}

That $P$ is TI invariant can be seen by noting that $\hat{P}$ is invariant under translations of $m$ ($n$) sites in the horizontal (vertical) axis. Random translations of $\{1,...,m\}$ and $\{1,...,n\}$ sites in those axes thus turn $\hat{P}$ into a 2D TI distribution.

It can be easily checked that 

\begin{align}
P_h=\frac{1}{(m-1)n}\sum_{x=0}^{m-2}\sum_{y=0}^{n-1} Q_{h+(x,y)}+O\left(\frac{1}{m}\right),\nonumber\\
P_v=\frac{1}{m(n-1)}\sum_{x=0}^{m-1}\sum_{y=0}^{n-2} Q_{v+(x,y)}+O\left(\frac{1}{n}\right).
\label{symmet_hv}
\end{align}

\noindent In general, 

\begin{align}
P_{s\times t}=&\frac{1}{(m-s)(n-t)}\sum_{x=0}^{m-s-1}\sum_{y=0}^{n-t-1} Q_{s\times t+(x,y)}+\nonumber\\
&+O\left(\frac{s}{m}\right)+O\left(\frac{t}{n}\right).
\label{symmet_gen}
\end{align}

Now we are ready to establish the equivalence between $\ENERGY$ and the computation of the minimum energy per site of a local Hamiltonian. Suppose that there exists a configuration $a_{\Z^2}$ of the square lattice such that $\lim_{N\to\infty}H_N(a_{N\times N})/N^2 =E$. Applying symmetrization over the distribution $Q_{N\times N}(b)=\delta_{b,a_{N\times N}}$, with energy-per-site $E_N$, one derives a 2D TI distribution $P$ with $\sum_{i=1}^n\sum_{a_{K_i}}F_i(a_{K_i})P_{K_i}(a_{K_i})=E_N+O(1/N)$. Taking the limit $N\to\infty$, we conclude that the energy value of any configuration can be matched by a TI marginal. Conversely, given any 2D TI distribution $P$ with $\sum_{i=1}^n\sum_{a_{K_i}}F_i(a_{K_i})P_{K_i}(a_{K_i})=E$, for any $N$ there exists, by convexity, an $N\times N$ square configuration $a_{N\times N}$ with $P_{N\times N}(a_{N\times N})\not=0$ such that $H_N(a_{N\times N})/N^2\leq E+O(1/N)$. It follows that the solutions of both problems are equal.

\section{Approximations of the set of TI marginals}
\label{properties}
In this section we will prove that, for any given scenario, the set of TI marginals admits an approximate characterization up to arbitrary accuracy.

The symmetrization protocol suggests a simple (but expensive) converging sequence of relaxations. Given $Q_K(a_K)$, with $K=s\times t$, a necessary condition for $Q_K(a_K)$ to be a TI marginal is that $Q_K(a_K)$ is the marginal of a distribution $P^{(n)}$ over the square $n\times n$, subject to the rules:

\begin{align}
&P^{(n)}_{(n-1)\times n+(x,0)}=P^{(n)}_{(n-1)\times n}, \mbox{ for } x=0,1,\nonumber\\
&P^{(n)}_{n\times (n-1)+(0,y)}=P^{(n)}_{n\times (n-1)}, \mbox{ for } y=0,1.
\label{mock_TI}
\end{align}

\noindent Intuitively, $P^{(n)}$ is modeling the marginal for the region $n\times n$ of an overall 2D TI distribution containing $Q_K$. 

The verification can be carried out via linear programming \cite{linear}. Linear programming is a branch of convex optimization concerned with the resolution of problems of the form

\begin{align}
&\max \bar{c}\cdot\bar{x},\nonumber\\
&\mbox{s.t. } A\bar{x}\geq \bar{b}, \bar{x}\geq 0.
\end{align}
\noindent where $\bar{c}\in\R^p$, the $q\times p$ matrix $A$ and $\bar{b}\in \R^q$ are the inputs of the problem; $\bar{x}\in \R^p$ are the problem variables; and $\bar{s}\geq 0$ is used to denote that all the components of the vector $\bar{s}$ are non-negative.

For each (primal) linear program there exists a dual problem

\begin{align}
&\min \bar{b}\cdot\bar{x},\nonumber\\
&\mbox{s.t. } A^T\bar{y}\geq \bar{c}, \bar{y}\geq 0.
\end{align}

\noindent Remarkably, the solutions of both primal and dual problems coincide. Numerical algorithms aimed at solving one problem hence run optimizations over the primal and dual problems. This allows the solver to give rigorous upper and lower bounds on the optimal solution. At present, there exist numerous free software implementations of interior-point methods for linear programs \cite{linear,Matousek2007}. In addition, all linear programs can be solved exactly via the costly Fourier-Motzkin elimination method \cite{fourier}. 

In our case, we need the solver to verify that there exists a probability distribution $P^{(n)}_{n\times n}$ for $n^2$ variables satisfying the linear conditions (\ref{mock_TI}), together with $P^{(n)}_{K}=Q_K$. This can be formulated as a linear program by regarding each probability $P^{(n)}_{n\times n}(a_K)$ as a free variable, and choosing $A,b$ so that $P^{(n)}_{n\times n}(a_K)$ satisfies the corresponding linear constraints. As for the objective function to optimize, we can take $\bar{c}=0$, i.e., the solution of the primal linear program will be zero provided that there exists a feasible point. If there exists a distribution $P^{(n)}_{n\times n}(a_K)$ compatible with $Q_K$, the solver will find it. Conversely, if there is no such distribution, then the solver will return a solution for the dual problem with an objective value smaller than $0$. Such a solution is, in effect, a \emph{certificate of infeasibility}, that is, a computer-generated proof that $Q_K$ is not a TI marginal.

The method described above is a relaxation of the property of being a TI marginal: if $Q_K$ does not pass the $n^{th}$ test, then it clearly is not a TI marginal. If, on the other hand, $Q_K$ passes the $n^{th}$ test, then we can apply the symmetrization protocol over the distribution $P_{n\times n}$ and obtain a 2D TI distribution $\hat{P}$ that, due to eq. (\ref{symmet_gen}) and the condition $P_K=Q_K$, satisfies $\hat{P}_K=Q_K+O\left(\frac{\max(s,t)}{n}\right)$. 

This algorithm is highly inefficient, though, since the time and space complexity of the computations scales as $O(e^{\alpha n^2})$. Actually, there is a much more practical relaxation of the set of TI marginals achieving the same accuracy that just involves $O(e^{\beta n})$ operations. 

\begin{defin}[Approximate solution to the marginal problem]
\label{approx_sol}

Given $Q_K$, with $K=s\times t$, $t\leq s$, verify that $Q_K$ is the marginal of $P^{(n)}_{n\times t}$, with 

\begin{align}
&P^{(n)}_{(n-1)\times t+(1,0)}=P^{(n)}_{(n-1)\times t},\nonumber\\
&P^{(n)}_{n\times (t-1)+(0,1)}=P^{(n)}_{n\times (t-1)}.
\label{mock_TI2}
\end{align}
\end{defin}
\noindent This can again be formulated as a linear program, and is obviously a relaxation of the property of being a TI marginal. The condition (\ref{mock_TI2}) will appear often in the rest of the article, so we will give it a name. Any distribution $P_{n\times t}$ in the rectangle $n\times t$ satisfying the above conditions will be called \emph{locally translation invariant} (LTI).

To see that the relaxation above is $O(s/n)$-close to the actual set of TI marginals, suppose that $Q_K$ is the marginal of $P^{(n)}_{n\times t}$. We will next extend $P^{(n)}_{n\times t}$ to a distribution $P'_{n\times n}$ with the property that the marginal of any $s\times t$ rectangle equals $Q_K$. 

In order to derive $P'_{n\times n}$, we regard $P^{(n)}_{n\times t}$ as the $t$-site marginal of a TI 1D system with local variables of dimension $d^n$. We can do so because $P^{(n)}_{n\times t}$ satisfies the second line of (\ref{mock_TI2}). From the triviality of the marginal problem for TI 1D systems, we know that there must exist a distribution $P'_{n\times n}$ satisfying the afore-mentioned properties. Finally, it is easy to see, from eq. (\ref{symmet_hv}), that the symmetrization of $P'_{n\times n}$ will be a 2D TI distribution $\hat{P}$ with marginal $\hat{P}_K=Q_K+O(s/n)$.

To conclude, we would like to remark that the above sequence of relaxations of the set of marginals also allows the user to approximately solve $\ENERGY$. Indeed, given $\{(F_i,K_i)\}_i$, one can carry out, via linear programming, the optimization:

\begin{align}
E^n\equiv &\min\sum_i\sum_{a_{K_i}}F_i(a_{K_i})P^{(n)}_{K_i}(a_{K_i}),\nonumber\\
\mbox{s.t. }&P^{(n)}_{(n-1)\times t+(x,0)}=P^{(n)}_{(n-1)\times t}, \mbox{ for } x=0,1,\nonumber\\
&P^{(n)}_{n\times (t-1)+(0,y)}=P^{(n)}_{n\times (t-1)}, \mbox{ for } y=0,1.
\end{align}

\noindent From what we reasoned above, it follows that $E^n\leq \ENERGY(\{(F_i,K_i)\}_i)\leq E^n+ O(1/n)$.

\section{The exact marginal problem in 2D: characterizations}
\label{positive}
In this section we will identify certain variants or scenarios where the TI marginal problem is exactly solvable. That is, where there exists an algorithm that will determine with certainty if the given distributions are TI marginals or not.

\subsection{The exact marginal problem with reflection symmetry}

Let us start with a relevant variant of the marginal problem: the 2D TI marginal problem with reflection symmetry. This is the case where, in addition to demanding the existence of a TI extension, we require this to be invariant under reflections on both axes. Since Nature is approximately invariant under parity reflection, this condition holds for the thermal states of many physically relevant Hamiltonians, such as the isotropic Ising model and the Potts model \cite{ising, potts}.

\begin{defin}[The $s\times 2$ TI marginal problem with reflection symmetry]

Let $K=s\times 2$. Given $Q_K(a_K)$, determine if there exists a 2D TI distribution $P$ such that $P_{K}(a_{K})=Q(a_{K})$ and $P_{n\times n}(a_{n\times n})=P_{n\times n}(a^H_{n\times n})=P_{n\times n}(a^V_{n\times n})$, for all $n$. Here, $a^H_{n\times n}$, $a^V_{n\times n}$ denote, respectively, the permutation of the variables $(a_{\{(x,y)\}})|_{0\leq x\leq s-1,0\leq y\leq 1}$ associated to a reflection over the horizontal and vertical axis, respectively.
\end{defin}

\noindent Interestingly, the $s\times 2$ TI marginal problem with reflection symmetry can be solved exactly. 

\begin{prop}
\label{symProp}
The existence of a 2D TI distribution with reflection symmetry is equivalent to the conditions:

\begin{align}
&Q_{K}(a_K)=Q_{K}(a^H_K)=Q_{K}(a^V_K),\nonumber\\
&Q_{(s-1)\times 2}=Q_{(s-1)\times 2+(1,0)}.
\label{symmetTI}
\end{align}

\end{prop}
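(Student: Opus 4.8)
The plan is to prove the two directions of the equivalence separately. The forward direction is immediate: if a reflection-symmetric 2D TI extension $P$ exists, then by definition $P_{n\times n}$ is invariant under both reflections for all $n$, and restricting to $K=s\times 2$ (which sits inside any large enough $n\times n$ square) forces $Q_K(a_K)=Q_K(a^H_K)=Q_K(a^V_K)$; the second condition $Q_{(s-1)\times 2}=Q_{(s-1)\times 2+(1,0)}$ is just the horizontal translation invariance of $P$ applied to the subrectangle $(s-1)\times 2$. So all the work is in the converse.

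For the converse, the natural strategy is to \emph{construct} a reflection-symmetric 2D TI distribution from $Q_K$, and the symmetrization philosophy of Section \ref{basic_prop} together with the triviality of the 1D marginal problem is the right engine. The key observation is that $K=s\times 2$ has only two rows, so I would try to view the problem as a 1D problem in disguise: treat each vertical pair of sites (a $1\times 2$ column) as a single super-site with local dimension $d^2$, so that $Q_{s\times 2}$ becomes a distribution $\tilde Q_{1,\dots,s}$ of $s$ consecutive super-sites of a 1D chain. The second condition in \eqref{symmetTI}, $Q_{(s-1)\times 2}=Q_{(s-1)\times 2+(1,0)}$, is exactly the 1D consistency condition $\tilde Q_{1,\dots,s-1}=\tilde Q_{2,\dots,s}$ for this super-site chain. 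By the folklore result for 1D, $\tilde Q$ therefore admits a 1D TI extension $\tilde P$, which unpacks to a 2D distribution $P$ that is translation-invariant horizontally and whose $s\times 2$ marginal is $Q_K$. The first condition handles vertical translation invariance and the two reflections: because $K$ has height $2$, I expect the horizontal-reflection symmetry $Q_K(a_K)=Q_K(a^H_K)$ (reflection swapping the two rows) to guarantee that the chain is symmetric under flipping each super-site, which promotes the horizontally-TI distribution to one that is also vertically TI and reflection-invariant after a final averaging.

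Concretely, I would carry out the construction in steps: (i) repackage $Q_{s\times 2}$ as a 1D distribution on super-sites and invoke the 1D extension theorem to get a horizontally-TI $P$; (ii) symmetrize $P$ over reflections by averaging $P$ with its images under the horizontal and vertical reflection maps — since reflections are symmetries of the lattice, the average of a TI distribution over the (finite) reflection group is again TI, and it is now reflection-invariant by construction; (iii) check that this averaging does not disturb the marginal, which holds because $Q_K$ is already reflection-symmetric by the first line of \eqref{symmetTI}, so each reflected copy of $P$ has the same $s\times 2$ marginal and the average preserves it. The last piece is establishing vertical translation invariance of the constructed $P$, which I expect to follow from combining vertical reflection symmetry with the two-row structure, or by a second application of the 1D argument in the vertical direction.

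The main obstacle, I anticipate, is step (iii) / the vertical direction: making sure that the distribution obtained from the horizontal 1D extension, once symmetrized over reflections, is genuinely \emph{vertically} translation-invariant and not merely reflection-symmetric. A 1D extension along the horizontal axis gives me no control a priori over vertical shifts, and the only inputs I have about the vertical structure are the height-$2$ marginal $Q_K$ and the reflection conditions. The delicate point is to argue that reflection symmetry across the horizontal axis, together with the fact that the extension is built from $2$-row data, suffices to pin down a consistent vertically-TI extension — essentially that the vertical reflection plays the role that a genuine vertical translation cannot, precisely because the relevant objects only span two rows. I would need to verify carefully that no additional vertical consistency condition beyond \eqref{symmetTI} is secretly required, i.e., that the two stated conditions really are sufficient and the construction closes up.
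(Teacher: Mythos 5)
Your overall route coincides with the paper's: reduce to a 1D super-site chain horizontally via the second condition of \eqref{symmetTI}, average over reflections, and then handle the vertical direction; you have also correctly located the crux, namely vertical translation invariance, and you even name the right tool (``a second application of the 1D argument in the vertical direction''). But you leave that step as something ``to verify carefully,'' and it is precisely the step that makes the proposition true, so as written the argument is not closed. The resolution is a one-line observation you should make explicit: after averaging the $n\times 2$ strip distribution over the reflection group as in \eqref{reflection_sym}, invariance under the reflection that swaps the two rows forces the two single-row marginals to coincide, $P^2_{n\times 1}=P^2_{n\times 1+(0,1)}$. Regarding each full row of $n$ sites as a single super-site of dimension $d^{n}$, this is exactly the 1D consistency condition $\tilde{Q}_{1}=\tilde{Q}_{2}$ for a two-site chain, so the folklore 1D extension theorem applies a second time, now in the vertical direction, and extends the strip to an $n\times n$ square. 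No hidden vertical consistency condition beyond \eqref{symmetTI} is required: the reflection supplies it.

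A second, more minor omission: your step (i) is phrased as though the horizontal 1D extension produced an infinite object, but the vertical super-site argument only works if each row is finite (so that a row is a single variable of dimension $d^{n}$); one must therefore work with finite $n\times 2$ strips $P^1_{n\times 2}$ satisfying $P^1_{K+(x,0)}=Q_K$ for $0\leq x\leq n-s$. The resulting $n\times n$ square is still not a 2D TI distribution; you must then run the symmetrization procedure of Section~\ref{basic_prop} on it, which by \eqref{symmet_hv} yields a reflection-symmetric 2D TI distribution whose $K$-marginal is only $Q_K+O(1/n)$, and finally let $n\to\infty$ and invoke the closedness of the set of TI marginals to recover $Q_K$ exactly. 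Without this limiting argument your construction produces only approximate extensions.
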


\begin{proof}
The second condition implies that, viewed as a 1D system of $s$ sites with local variables of dimension $d^2$, $Q_{s\times 2}$ is the marginal of a TI system. In particular, for any $n$ one can find $P^1_{n\times 2}$ with the property that $P^1_{K+(x,0)}=Q_{K}$ for $0\leq x\leq n-s$. This property is kept if we make the distribution invariant under reflection on both axes.

\begin{align}
P^2_{2\times n}(a_{2\times n})=&\frac{1}{4}(\hat{P}_{n\times 2}(a_{n\times 2})+\hat{P}_{n\times 2}(a^H_{n\times 2})+\nonumber\\
&\hat{P}_{v}(a^V_{n\times 2})+\hat{P}_{n\times 2}(a^{HV}_{n\times 2})).
\label{reflection_sym}
\end{align}

Reflection under the horizontal axis implies, in particular, that $P^2_{n\times 1}=P^2_{n\times 1+(0,1)}$. Therefore, we can view $P^2$ as the $2$-site marginal of a 1D TI system with variables of local dimension $d^{n}$ and extend it to an $n\times n$ square. From that point on, we can consider the symmetrization of this distribution $P^3_{n\times n}$, which will return a 2D TI reflection-symmetric distribution with $Q_K+O(1/n)$ as a marginal. Invoking the closure of the set of marginals of 2D TI distributions \cite{Goldstein2017}, we conclude that $Q_K$ admits a 2D TI extension $P$. Finally, since $Q_K$ satisfies reflection symmetry, one can choose $P$ to be symmetric as well.

\end{proof}

The characterization (\ref{symmetTI}) of the set of $s\times 2$ TI marginals with reflection symmetry implies that, for $F_K$ satisfying $F_K(a_K)=F_K(a^H_K)=F_K(a^V_K)$, $\ENERGY(F_K)$ is exactly solvable. Indeed, let $\hat{P}_K$ be any TI marginal and call $E$ the value of the functional in eq. (\ref{energy_functional}) evaluated in $\hat{P}_K$. Then it is easy to see that $P^2_{s\times 2}$, as defined by eq. (\ref{reflection_sym}) (replacing $n$ by $s$), admits a TI, reflection-symmetric extension. Moreover, the value of the functional is also $E$. It follows that, for $F_K(a_K)=F_K(a^H_K)=F_K(a^V_K)$, one can assume that the minimizer of $\ENERGY(F_K)$ is a TI, reflection-symmetric marginal. Hence one can use linear programming to solve the problem.

\subsection{Binary local random variables}

Take the local dimension of the random variable at each site to be $d=2$ (bits), and suppose that we just want to characterize the bivariate distributions $P_h(a,b), P_v(a,b)$ between nearest-neighbors. The next proposition states that the problem is solvable even for lattices of spatial dimensions higher than $2$. 

\begin{prop}
Let $C=\{(1,0,0,...),(0,1,0,...),...\}$ be the set of all $k$-dimensional vectors with null components except one entry with value $1$, and let $\{P_{\{\bar{0},\bar{c}\}}(a,b): \bar{c}\in C\}$ be all nearest-neighbor marginals of a hypercubic lattice of spatial dimension $k$, with $a,b\in\{0,1\}$. Then \textbf{MARGINAL}$(\{P_{\{\bar{0},\bar{c}\}}(a,b): \bar{c}\in C\})$ can be formulated as a linear program. Moreover, for spatial dimensions $k=2,3$, the existence of a TI extension for $\{P_{\{\bar{0},\bar{c}\}}(a,b): \bar{c}\in C\}$ is equivalent to LTI, i.e., the condition that

\be
\sum_{b} P_{\{\bar{0},\bar{c}\}}(x,b)=\sum_{a} P_{\{\bar{0},\bar{c}'\}}(a,x),
\label{trivial2}
\ee
\noindent for all $\bar{c},\bar{c}'\in C$.
\end{prop}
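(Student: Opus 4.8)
The plan is to prove the two assertions separately. For the first claim—that the membership problem can be cast as a linear program—I would argue exactly as in the approximation scheme of Section~\ref{properties}, but now the local dimension $d=2$ makes the relaxations converge to the \emph{exact} answer rather than merely approximating it. The key observation is that for $d=2$ the nearest-neighbor marginals $P_{\{\bar{0},\bar{c}\}}(a,b)$ for $\bar{c}\in C$ are specified by finitely many real parameters, so the set of marginals admitting a TI extension is a projection of the (closed, convex) set of TI distributions. The task is to show that this projection coincides with a polytope cut out by finitely many linear inequalities, which for the reflection-free case reduces to checking that the LTI consistency conditions~(\ref{trivial2}) are not only necessary but also sufficient. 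This brings us to the substantive content, namely the equivalence for $k=2,3$.

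For the forward (easy) direction I would note that LTI is clearly necessary: any genuine TI marginal must have consistent single-site marginals, which is precisely what eq.~(\ref{trivial2}) enforces. The heart of the proof is the converse: given nearest-neighbor bivariate distributions $\{P_{\{\bar{0},\bar{c}\}}\}_{\bar{c}\in C}$ on bits satisfying LTI, I must \emph{construct} a genuine 2D (or 3D) TI distribution realizing them. My strategy is to exhibit an explicit TI product-type or Markov-type model. Since the variables are binary, a single-site marginal $p\equiv P(a=1)$ is fixed by LTI, and each pairwise distribution $P_{\{\bar{0},\bar{c}\}}$ is determined by one correlation parameter once $p$ is fixed. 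I would then try to write down a translation-invariant distribution—for instance one built from independent auxiliary random variables attached to edges or plaquettes, or a convex mixture of deterministic TI configurations (stripe/checkerboard patterns and their translates, symmetrized via the procedure of Definition~3)—whose nearest-neighbor statistics hit any LTI-consistent target. The crucial point will be to show that the extreme points of the LTI polytope are all individually extendible, after which convexity and the closure of the TI set (invoked as in Proposition~\ref{symProp}) finish the argument.

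The main obstacle, and the reason the claim is restricted to $k=2,3$, is that the LTI polytope's vertices become harder to extend as the spatial dimension grows: the number of independent correlations $\{P_{\{\bar{0},\bar{c}\}}\}_{\bar{c}\in C}$ is $k$, but the correlations between nearest neighbors along different axes must be jointly realizable by a single TI state, and this joint realizability can fail once $k$ is large. I expect that for $k=2,3$ one can enumerate the finitely many extreme LTI marginals and, for each, supply a concrete TI model—likely a mixture of frozen periodic configurations whose correlation structure I can compute directly—whereas for higher $k$ no such finite construction covers every vertex. Thus the delicate step is verifying that every extreme point of the region defined by~(\ref{trivial2}) admits an explicit TI extension in dimensions two and three; once that is in hand, the symmetrization procedure together with the closure of the marginal set upgrades the extreme-point result to the full polytope, and the linear-programming formulation of the first assertion follows because the feasible region is then exactly $\{(\,P_{\{\bar{0},\bar{c}\}}\,)_{\bar{c}} : \text{eq.~(\ref{trivial2}) holds},\ P_{\{\bar{0},\bar{c}\}}\geq 0,\ \textstyle\sum P_{\{\bar{0},\bar{c}\}}=1\}$.
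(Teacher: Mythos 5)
There is a genuine gap in your treatment of the first assertion. The claim that $\MARGINAL(\{P_{\{\bar{0},\bar{c}\}}\})$ is a linear program is made for \emph{every} spatial dimension $k$, yet your proposal only delivers it as a corollary of the LTI characterization, which you (correctly) restrict to $k=2,3$; your alternative justification, that for $d=2$ the relaxation hierarchy of Section \ref{properties} ``converges to the exact answer'' at some finite level, is asserted without any argument and is not something the paper claims either. The idea you are missing is the one the paper's proof actually turns on: for binary variables, the LTI condition $\sum_a P(a,x)=\sum_a P(x,a)$ forces each bivariate distribution $P_{\{\bar{0},\bar{c}\}}$ to be \emph{symmetric} (since $P(0,1)=P(1,0)$ pins down the whole $2\times 2$ table), so applying the reflection-symmetrization map of eq. (\ref{reflection_sym}) to any TI extension leaves the nearest-neighbor marginals unchanged. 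Hence $(P_h,P_v,\dots)$ is a TI marginal iff it is a TI \emph{reflection-symmetric} marginal, and the latter problem is exactly solvable by the finite linear conditions of Proposition \ref{symProp}, in any spatial dimension. Without this reduction you have no route to the LP formulation for general $k$.

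For the second assertion your strategy --- enumerate the extreme points of the LTI polytope and exhibit an explicit periodic TI configuration (stripes, checkerboards, etc.) realizing each, then invoke convexity and closure --- is a legitimate alternative and is in fact the method the paper uses for Propositions \ref{nextToProp} and \ref{nextToProp2}. The paper's proof of \emph{this} proposition instead works dually: having identified the TI set with the LP of Proposition \ref{symProp}, it computes the facets of that polytope and verifies by linear programming that no LTI-consistent point violates them. Either direction is fine, but note that in your version the enumeration and the case-by-case extension of the vertices \emph{is} the entire content of the claim for $k=2,3$; as written you only ``expect'' it to work, so the substantive step remains to be carried out.
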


For $d=2$ and $k=2,3$ the nearest-neighbor marginal problems are thus trivial. This perhaps explains why the only known solvable classical models in 2D are bit models.

\begin{proof}

For simplicity, consider the $2D$ case $k=2$. Then, for $d=2$, LTI on either $P_h$ or $P_v$ (i.e., $\sum_{a}P_h(a,x)=\sum_{a}P_h(x,a)$, $\sum_{a}P_v(a,x)=\sum_{a}P_v(x,a)$) implies that both distributions $P_h$ and $P_v$ are symmetric. Now, suppose that the pair $(P_h,P_v)$ admits a 2D TI extension. If we make this extension reflection-invariant -via eq. (\ref{reflection_sym})-, the distributions $P_h,P_v$ will not change. In other words, $(P_h, P_v)$ is a TI marginal iff it is a TI reflection-symmetric marginal. From Proposition \ref{symProp}, we know that the marginals of 2D TI reflection-symmetric distributions correspond to the marginals of symmetric, LTI squares. This allows us to completely characterize the set of TI marginals $(P_h, P_v)$ via linear programming. The above symmetrization argument holds not only in 2D, but in any spatial dimension, and so does the proof of Proposition \ref{symProp}. 

It follows that the set of nearest-neighbor marginals in any dimension is described by a convex polytope, i.e., a set defined by a finite number of linear inequalities or \emph{facets} $\{F^i_h(P_h)+F^i_v(P_v)+...\leq 0\}_{i=1}^n$, where $F^i_h, F^i_v,...,$ are linear functionals on the probabilities $P_h(a,b)$, $P_v(a,b)$, etc. Using standard combinatorial software \cite{panda}, we managed to derive the facets $\{F^i_h(P_h)+F^i_v(P_v)\leq 0\}_{i=1}^n$ which define the 2D set. We verified, using linear programming, that the set all distributions $(P_h, P_v)$ with 

\begin{align}
&\sum_{a}P_h(a,x)=\sum_{a}P_h(x,a)=\nonumber\\
&\sum_{a}P_v(a,x)=\sum_{a}P_v(x,a)
\label{trivial}
\end{align}

\noindent cannot violate any of them. This implies that the above conditions, namely, LTI, characterize completely the set of nearest-neighbor TI marginals. Similarly, we verified that the analog 3D problem, with input $\{P_{\{\vec{0},\hat{c}\}}(a,b): \hat{c}=(1,0,0), (0,1,0),(0,0,1)\}$, reduces to verifying that conditions (\ref{trivial2}) hold.
\end{proof}

We now move to the problem of characterizing the distributions $(P_h,P_v,P_{+},P_{-})$ corresponding to horizontal and vertical nearest-neighbors and north-east $(+)$ and south-east $(-)$ next-to-nearest neighbor distributions in $2D$. This problem is not trivial (i.e., it does not reduce to verifying LTI), since the distribution $P_h(a,b)=P_v(a,b)=P_+(a,b)=P_-(a,b)=\frac{1}{2}\delta_{a\oplus b,1}$ satisfies $\sum_aP_s(a,x)=\sum_aP_t(x,a)$ for $s,t=h,v,+,-$ and yet it does not admit a TI extension~\footnote{Consider the random variables $a_{0,0},a_{1,0},a_{0,-1}$ and imagine that there exists a probability distribution $P$ for the three of them, with $P_h, P_v, P_{+}$ as marginals. Then, $\frac{1}{2}=P_v(a_{0,0}=0,a_{0,-1}=1)=P(a_{0,0}=0,a_{1,0}=0,a_{0,-1}=1)+P(a_{0,0}=0,a_{1,0}=1,a_{0,-1}=1)$. However, by $P_v(0,0)=0$ ($P_{+}(1,1)=0$), the first (second) term of the right hand side must equal zero. We thus reach a contradiction.}. And yet, as the next result shows, the set of nearest and next-to-nearest neighbor marginals can also be characterized via linear programming.

\begin{prop}
\label{nextToProp}
For $d=2$, the nearest and next-to-nearest neighbors marginals $P_h,P_v,P_{+},P_{-}$ admit a TI extension iff they are marginals of a distribution $P_{2\times 2}$ satisfying LTI.
\end{prop}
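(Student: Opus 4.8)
The plan is to prove the two implications separately, with necessity being immediate and sufficiency carrying all the weight. For necessity, if $P_h,P_v,P_+,P_-$ admit a 2D TI extension $P$, I would simply restrict $P$ to the square $2\times 2$. Translation invariance forces the resulting $P_{2\times 2}$ to satisfy LTI (its left and right columns, and its top and bottom rows, coincide as bivariate distributions), and by construction its pairwise marginals reproduce $P_h,P_v,P_+,P_-$. So the real content is the converse: from an LTI block $P_{2\times 2}$ with the prescribed pairwise marginals, build a genuine 2D TI distribution having $P_h,P_v,P_+,P_-$ as marginals.

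For sufficiency the strategy is to reduce, following the approximation scheme of Section \ref{properties}, to the construction for every $n$ of a fully LTI distribution on the square $n\times n$ whose four relevant pairwise marginals are exactly $P_h,P_v,P_+,P_-$; symmetrization then yields a 2D TI distribution matching these marginals up to $O(1/n)$, and the closure of the set of TI marginals \cite{Goldstein2017} upgrades this to an exact extension as $n\to\infty$. To build the $n\times n$ block I would apply the triviality of the 1D marginal problem twice. First, regard $P_{2\times 2}$ as the joint law of two horizontally adjacent columns (super-sites of dimension $d^2=4$); LTI makes the two single-column marginals agree, so there is a horizontally-TI strip of height $2$ and width $n$ in which every $2\times 2$ window is distributed as $P_{2\times 2}$, and hence all of whose $h,v,+,-$ marginals are already correct. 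Second, I would view this strip as the joint law of two vertically adjacent rows and invoke the 1D result again to stack it into an $n\times n$ block that is TI in both directions.

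The main obstacle lies in this second step. To stack the strip vertically via the 1D folklore one needs its top-row and bottom-row \emph{full} distributions to coincide, and this equidistribution generically fails for the natural strip: the column-wise Markov chain with two-column marginal $P_{2\times 2}$ has bottom- and top-row processes that agree in their one- and two-point statistics but not beyond, and indeed requiring every window to equal $P_{2\times 2}$ \emph{and} the two rows to be exchangeable would force $P_+=P_-$, which we do not assume; in particular, and in contrast to the reflection-symmetric setting of Proposition \ref{symProp}, one cannot symmetrize over the horizontal-axis reflection, since that operation interchanges $P_+$ and $P_-$. The resolution must exploit that only the four pairwise marginals need be preserved, not the full four-point law $P_{2\times 2}$: I would use this freedom to replace the Markov strip by a non-Markov, horizontally-TI height-$2$ process whose two rows are \emph{identically distributed but not exchangeable}, still realizing $P_h,P_v,P_+,P_-$. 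Here the binary structure is crucial: for $d=2$ stationarity already forces each two-point marginal to be symmetric (since $P_\bullet(0,1)=p-P_\bullet(0,0)=P_\bullet(1,0)$, with $p$ the common single-site weight), so the orientation within each pair costs nothing and only the genuine chirality $P_+\neq P_-$ has to be carried by the coupling. Establishing the existence of such an equidistributed coupling — for instance by a compactness/fixed-point argument on the convex set of vertically-TI couplings of two rows with the prescribed local cross-marginals, or by an explicit channel construction from the LTI data — is the crux; once it is in hand, the vertical 1D extension, symmetrization, and closure complete the proof exactly as above.
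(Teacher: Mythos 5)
Your necessity direction is fine and matches the paper. But the sufficiency direction has a genuine gap, and you have correctly located it yourself: the existence of a horizontally-TI height-$2$ process whose two rows are identically distributed (so that the 1D folklore can stack it vertically) while still reproducing $P_h,P_v,P_+,P_-$ is asserted, not proven. This existence claim is essentially equivalent to the proposition itself. Note that the obstruction is not merely technical: the distribution $P_h=P_v=P_+=P_-=\frac{1}{2}\delta_{a\oplus b,1}$ satisfies every single-site stationarity condition (including the automatic symmetry $P_\bullet(0,1)=P_\bullet(1,0)$ that you invoke), yet admits no TI extension and indeed no $2\times 2$ LTI extension. So any correct proof must use the full joint constraint that a four-point distribution $P_{2\times 2}$ with the prescribed pairwise marginals exists and is LTI, and a soft compactness or fixed-point argument on couplings of two rows will not distinguish the extendible case from this one unless it genuinely exploits the geometry of the $2\times 2$ LTI polytope. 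Your sketch never does that, so the crux step remains open.

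The paper's proof takes a different and entirely finite route: it computes (via vertex-enumeration software) the $13$ extreme points of the polytope of tuples $(P_h,P_v,P_+,P_-)$ admitting a $2\times 2$ LTI extension, groups them into $6$ symmetry classes, and for each exhibits an explicit periodic configuration (a $2\times 2$, $3\times 3$ or $4\times 4$ deterministic block) whose symmetrization is a 2D TI distribution with exactly those marginals. Since every extreme point of the outer (LTI) polytope is thus shown to lie in the inner (TI) convex set, the two sets coincide. If you want to salvage your approach, you would need to prove that every $2\times 2$ LTI distribution for $d=2$ extends to an $n\times n$ LTI distribution for all $n$ (i.e., that the first level of the hierarchy of Definition \ref{approx_sol} already coincides with all higher levels in this scenario); there is no generic reason for this collapse — it fails for larger $d$, as the undecidability and non-semi-algebraicity results of Section \ref{negative} show — so some case analysis tied to $d=2$ appears unavoidable.
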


\begin{proof}

If $P_h,P_v,P_{+},P_{-}$ are TI marginals, then they must constitute an approximate solution of the marginal problem, in the sense explained in Definition \ref{approx_sol}. In particular, they must belong to the polytope of distributions $P_h,P_v,P_{+},P_{-}$ admitting a $2\times 2$ LTI extension. We find, using the combinatorial software Panda \cite{panda}, that this polytope has $13$ extreme points, each of which belongs to $6$ classes modulo rotations, reflections and relabelings of the variables:

\begin{enumerate}[label=(C{\arabic*})]
\item 
The two points in this class are deterministic: $P_h(a,b)=P_v(a,b)=P_{+}(a,b)=P_{-}(a,b)=\delta_{a,0}\delta_{b,0}$ and $P_h(a,b)=P_v(a,b)=P_{+}(a,b)=P_{-}(a,b)=\delta_{a,1}\delta_{b,1}$. They obviously admit a deterministic 2D TI extension.

\item 
This class has two elements of the form
\begin{align}
&P_{h}(a,b)=\frac{1}{2}\delta_{a,b\oplus s},P_{v}(a,b)=\frac{1}{2}\delta_{a\oplus 1,b\oplus s}\\
&P_{+}(a,b)=P_{-}(a,b)=\frac{1}{2}\delta_{a\oplus 1,b},
\end{align}
\noindent with $s=0,1$.

The two elements are generated by applying the symmetrization process to the $2\times 2$ deterministic distributions

\be
\begin{array}{|c|c|}
\hline
0&0\\
\hline
1&1\\
\hline
\end{array},
\begin{array}{|c|c|}
\hline
1&0\\
\hline
1&0\\
\hline
\end{array}.
\ee

\item
This class contains just one element, namely:
\begin{align}
&P_{h}(a,b)=P_{v}(a,b)=\frac{1}{2}\delta_{a\oplus 1,b}\\
&P_{+}(a,b)=P_{-}(a,b)=\frac{1}{2}\delta_{a,b}.
\end{align}

It is generated by the $2\times 2$ deterministic distribution

\be
\begin{array}{|c|c|}
\hline
1&0\\
\hline
0&1\\
\hline
\end{array}.
\ee

\item
The class has two elements of the form

\begin{align}
&P_h(a,b)=P_v(a,b)=P_{+}(a,b)=P_{-}(a,b)=\frac{1}{4}(\delta_{a,s}+\delta_{b,s}),
\end{align}
\noindent with $s=0,1$. They are respectively generated by symmetrizing the $2\times 2$ deterministic distributions:
\be
\begin{array}{|c|c|}
\hline
1&0\\
\hline
0&0\\
\hline
\end{array},\begin{array}{|c|c|}
\hline
0&1\\
\hline
1&1\\
\hline
\end{array}.
\ee

\item
This class has four elements, generated via applying reflections to the generator:
\begin{align}
&P_h(a,b)=P_v(a,b)=P_{+}(a,b)=\frac{1}{3}\delta_{a\cdot b,0},\\
&P_{-}(a,b)=\frac{2}{3}\delta_{a,0}\delta_{b,0}+\frac{1}{3}\delta_{a,1}\delta_{b,1}.
\end{align}

The latter, in turn, can be generated by the $3\times 3$ deterministic distribution
\be
\begin{array}{|c|c|c|}
\hline
0&0&1\\
\hline
0&1&0\\
\hline
1&0&0\\
\hline
\end{array}.
\ee

\item
This class has two elements, generated either by applying the identity or a $\pi/2$ rotation over the generator:
\begin{align}
&P_h(a,b)=P_v(a,b)=\frac{1}{4},\\
&P_{+}(a,b)=\frac{1}{2}\delta_{a\oplus b,1},P_{-}(a,b)=\frac{1}{2}\delta_{a\oplus b,0}.
\end{align}

It can be verified that the generator in turn can be built from the $4\times 4$ distribution:

\be
\begin{array}{|c|c|c|c|}
\hline
1&1&0&0\\
\hline
1&0&0&1\\
\hline
0&0&1&1\\
\hline
0&1&1&0\\
\hline
\end{array}.
\ee

\end{enumerate}

\noindent Since all these points are TI marginals, any convex combination thereof will also be a TI marginal. It follows that, for $d=2$, the first level of the hierarchy presented in Def. \ref{approx_sol} already characterizes all nearest and next-to-nearest neighbor TI marginals. 

\end{proof}

From Proposition \ref{nextToProp} one can infer the exact solvability of the $\ENERGY$ problem for TI Hamiltonians with nearest and next-to-nearest neighbor interactions. Indeed, since any TI marginal is a convex combination of the above $13$ points, one just needs to evaluate the energy per site of each of them and take the smallest result. It is worth noting that, for each extreme point $P$ listed above, there exists a unique (modulo translations) deterministic distribution $\tilde{P}$ for the whole lattice such that $\mbox{supp}(\tilde{P}_{g+z})\subset \mbox{supp} P_{g}$, with $g=h,v,+,-$ and $z\in\Z^2$. Now, let the value of $\ENERGY(F_h,F_{v},F_{+},F_{-})$ be achieved by just one extreme point $P$ of the set of TI marginals. If $P$ is generated by tiling the plane with an irreducible rectangle of size $m\times n$, then the number of lattice configurations achieving the minimum energy per site of the corresponding Hamiltonian is $mn$. Assuming that, at zero temperature, the system is in an equal mixture of all such configurations, its entropy will be $S_0=\ln(mn)$. As we will see in the next section, the uniqueness of the extended distribution on the whole lattice is lost when each site can take more than two values. This is in contrast to the situation in one dimension, where maximal entropy extensions of LTI distributions exist and are unique~\cite{Goldstein2017}.

\subsection{Ternary local random variables}
The purpose of this section is to characterize the set of TI nearest-neighbor marginals $P_h,P_v$ for ternary local random variables ($d=3$). For $d=2$, all such $P_h$ and $P_v$ can be characterized simply by imposing (\ref{trivial2}). For $d=3$, this condition is not sufficient. Take, for example, the distribution $P_h(0,2)=P_h(1,0)=P_h(2,1)=P_v(0,2)=P_v(1,2)=P_v(2,1)=\frac{1}{3}$. It can be easily checked that this distribution satisfies eq. (\ref{trivial2}). However, one can verify numerically that it does not belong to the first level of the hierarchy of approximations proposed in Definition \ref{approx_sol}. 

Luckily, this first approximation turns out to be enough to characterize the nearest-neighbor TI marginals. 

\begin{prop}
\label{nextToProp2}
For $d=3$, the nearest-neighbor marginals $P_h,P_v$ admit a TI extension iff they are compatible with a distribution $P_{2\times 2}$ satisfying LTI.
\end{prop}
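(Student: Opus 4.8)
The plan is to prove the two implications separately, with the forward direction essentially immediate and all the work residing in the converse.

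For the \emph{only if} direction: if $P_h$ and $P_v$ admit a 2D TI extension, then — exactly as argued in the proof of Proposition \ref{nextToProp} — they constitute an approximate solution of the marginal problem in the sense of Definition \ref{approx_sol}, and in particular they belong to the polytope of nearest-neighbor distributions admitting a $2\times 2$ LTI extension. This is just the statement that the first level of the hierarchy of Section \ref{properties} is a genuine relaxation, so nothing new is required here.

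For the \emph{if} direction I would follow the strategy used for Proposition \ref{nextToProp}. The set of pairs $(P_h,P_v)$ compatible with some LTI $P_{2\times 2}$ is, by construction, the linear projection onto the $(P_h,P_v)$-coordinates of the polytope of LTI distributions on four ternary variables, hence is itself a polytope. Using the combinatorial software Panda \cite{panda} I would enumerate its extreme points and sort them into orbits under the natural symmetry group of the problem, namely the dihedral group of the square (generated by the $\pi/2$ rotation and the two axis reflections) together with the group of relabelings of the three local values; this group has order $48$ and should collapse the vertex list into a manageable number of representatives. For each representative it then suffices to exhibit a single 2D TI distribution realizing it as a nearest-neighbor marginal, because the set of TI marginals is convex and closed \cite{Goldstein2017} and therefore contains the entire polytope once it contains all of its vertices. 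The concrete tool for each vertex is the symmetrization procedure: for a doubly-periodic deterministic configuration with periods $m$ and $n$, tiling the plane and symmetrizing produces an \emph{exact} 2D TI distribution (the $O(1/n)$ corrections in eqs. (\ref{symmet_hv}) and (\ref{symmet_gen}) vanish because the pattern is genuinely periodic), whose horizontal and vertical nearest-neighbor marginals are precisely the adjacency frequencies of the pattern. Thus, for each orbit representative $(P_h,P_v)$ I would search for a finite $m\times n$ pattern over $\{0,1,2\}$ whose edge-frequencies reproduce $P_h$ and $P_v$, which is the same mechanism by which the thirteen vertices of Proposition \ref{nextToProp} were each generated by a small deterministic tile.

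The main obstacle will be constructing a periodic generator for \emph{every} vertex, and explaining why no purely conceptual argument seems to replace the enumeration. In one dimension the analogous step is easy: the LTI condition (\ref{trivial2}) says precisely that $P_h$, viewed as a nonnegative edge-weighting of the complete graph on $\{0,1,2\}$ with self-loops, is balanced at every vertex (equal in- and out-flow), so by the standard cycle/flow decomposition any rational such $P_h$ is the edge-frequency profile of a closed walk, hence of a periodic word. What makes the two-dimensional statement genuinely nontrivial is that one must realize $P_h$ and $P_v$ \emph{simultaneously} by a single planar periodic configuration, and exact 2D adjacency-matching is exactly the kind of tiling constraint that is undecidable in full generality (cf. Section \ref{negative}). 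The content of the proposition is that, for $d=3$ nearest neighbors, this obstruction never bites at the level of the vertices; establishing this rigorously requires both a verified, complete enumeration of the extreme points and an explicit periodic tile for each orbit, with the hardest cases being vertices whose probabilities carry nontrivial denominators — analogous to the $1/3$ and $1/2$ weights produced by the $3\times 3$ and $4\times 4$ tiles in Proposition \ref{nextToProp} — where the correct period $m\times n$ must be found by hand.
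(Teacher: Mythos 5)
Your proposal matches the paper's proof essentially step for step: the forward direction is the observation that $2\times 2$ LTI is a relaxation, and the converse is carried out exactly as you describe, by enumerating the vertices of the LTI polytope with Panda (the paper finds $98$ vertices in $10$ symmetry classes) and exhibiting, for each class representative, a periodic deterministic configuration whose symmetrization realizes it as an exact TI marginal, whence convexity and closedness of the set of TI marginals finish the argument. The only remaining work in your plan is the actual computation and the explicit tiles, which is precisely what the paper supplies.
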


\begin{proof}

The proof follows the same lines as the proof of Proposition \ref{nextToProp}. To find the extreme points/facets of the polytope of marginals admitting a $P_{2\times 2}$ LTI extension, we first use a linear program to maximize random objective functions under such a set. The vectors thus obtained are either vertices of the polytope or lie on the facets. By computing the dual description of this set of vectors, we obtain a set of linear inequalities some of which may be facets of the polytope we seek. To check whether the inequalities are indeed facets we use linear programming again but this time taking the vectors which define the inequalities (without the bounds) as objective functions. If an inequality is indeed a facet then maximizing this objective function using the same constrains as before will give us the upper bound of the inequality as the value of the objective function. If it is not a facet then this maximization will violate the upper bound of the inequality, at which time we can add the vector achieving this maximization to the list of potential vertices. By iterating this violation/addition procedure until no inequality can be violated when performing the second maximization, we obtain all the facets of the polytope, from which the extreme points can also be computed. In the $d=3$ case, the polytope is defined by 98 extreme points, which fall into 10 equivalence classes after taking local permutations of outcomes and reflections into account. A representative from each class is given below, with $P_h(a,b)$ and $P_v(a,b)$ written as a vector whose $i$th coordinate in base $3$ gives the values of $a,b$: $P_{h,v}(a,b)\equiv(0_a0_b,0_a1_b,0_a2_b,\ldots,2_a2_b)$. It can be readily checked that each of these class representatives can be generated by symmetrizing the depicted deterministic distributions accompanying it.

%\begin{figure}
\begin{enumerate}[label=(C{\arabic*})]

\item $\begin{array}{|c|}
\hline
2\\
\hline
\end{array}$

\begin{align*}
P_h&=(0,0,0,0,0,0,0,0,1),\\
P_v&=(0,0,0,0,0,0,0,0,1).
\end{align*}
%$P_h=(0,0,0,0,0,0,0,0,1),\,P_v=(0,0,0,0,0,0,0,0,1)$
%\begin{equation*}
%\begin{array}{|c|}
%\hline
%2\\
%\hline
%\end{array}
%\end{equation*}

\item $\begin{array}{|c|c|}
\hline
1&2\\
\hline
2&1\\
\hline
\end{array}$
\begin{align*}
P_h&=(0,0,0,0,0,\frac{1}{2},0,\frac{1}{2},0),\\
P_v&=(0,0,0,0,0,\frac{1}{2},0,\frac{1}{2},0).
\end{align*}
%$P_h=(0,0,0,0,0,\frac{1}{2},0,\frac{1}{2},0),\,P_v=(0,0,0,0,0,\frac{1}{2},0,\frac{1}{2},0)$
%\begin{equation*}
%\begin{array}{|c|c|}
%\hline
%1&2\\
%\hline
%2&1\\
%\hline
%\end{array}
%\end{equation*}

\item $\begin{array}{|c|c|}
\hline
1&2\\
\hline
\end{array}$
\begin{align*}
P_h&=(0,0,0,0,0,\frac{1}{2},0,\frac{1}{2},0),\\
P_v&=(0,0,0,0,\frac{1}{2},0,0,0,\frac{1}{2}).
\end{align*}
%$P_h=(0,0,0,0,0,\frac{1}{2},0,\frac{1}{2},0),\,P_v=(0,0,0,0,\frac{1}{2},0,0,0,\frac{1}{2})$
%\begin{equation*}
%\begin{array}{|c|c|}
%\hline
%1&2\\
%\hline
%\end{array}
%\end{equation*}

\item $\begin{array}{|c|c|}
\hline
0&2\\
\hline
2&2\\
\hline
1&1\\
\hline
\end{array}$
\begin{align*}
P_h&=(0,0,\frac{1}{6},0,\frac{1}{3},0,\frac{1}{6},0,\frac{1}{3}),\\
P_v&=(0,0,\frac{1}{6},\frac{1}{6},0,\frac{1}{6},0,\frac{1}{3},\frac{1}{6}).
\end{align*}
%$P_h=(0,0,\frac{1}{6},0,\frac{1}{3},0,\frac{1}{6},0,\frac{1}{3}),\,P_v=(0,0,\frac{1}{6},\frac{1}{6},0,\frac{1}{6},0,\frac{1}{3},\frac{1}{6})$
%\begin{equation*}
%\begin{array}{|c|c|}
%\hline
%0&2\\
%\hline
%2&2\\
%\hline
%1&1\\
%\hline
%\end{array}
%\end{equation*}

\item $\begin{array}{|c|c|}
\hline
0&2\\
\hline
2&1\\
\hline
1&1\\
\hline
\end{array}$

\begin{align*}
P_h&=(0,0,\frac{1}{6},0,\frac{1}{3},\frac{1}{6},\frac{1}{6},\frac{1}{6},0),\\ P_v&=(0,0,\frac{1}{6},\frac{1}{6},\frac{1}{6},\frac{1}{6},0,\frac{1}{3},0).
\end{align*}
%$P_h=(0,0,\frac{1}{6},0,\frac{1}{3},\frac{1}{6},\frac{1}{6},\frac{1}{6},0),\, P_v=(0,0,\frac{1}{6},\frac{1}{6},\frac{1}{6},\frac{1}{6},0,\frac{1}{3},0)$
%\begin{equation*}
%\begin{array}{|c|c|}
%\hline
%0&2\\
%\hline
%2&1\\
%\hline
%1&1\\
%\hline
%\end{array}
%\end{equation*}

%\item $P_h=(0,0,\frac{1}{6},\frac{1}{6},0,\frac{1}{6},0,\frac{1}{3},\frac{1}{6}),\, P_v=(0,\frac{1}{6},0,\frac{1}{6},0,\frac{1}{6},0,\frac{1}{6},\frac{1}{3})$
%\begin{equation*}
%\begin{array}{|c|c|c|c|c|c|}
%\hline
%2&1&0&2&2&1\\
%\hline
%2&2&1&2&1&0\\
%\hline
%\end{array}
%\end{equation*}
%
%\item $P_h=(0,0,\frac{1}{6},\frac{1}{6},\frac{1}{6},\frac{1}{6},0,\frac{1}{3},0),\, P_v=(0,\frac{1}{6},0,\frac{1}{6},\frac{1}{3},0,0,0,\frac{1}{3})$
%\begin{equation*}
%\begin{array}{|c|c|c|}
%\hline
%1&2&1\\
%\hline
%0&2&1\\
%\hline
%\end{array}
%\end{equation*}

\item $\begin{array}{|c|c|}
\hline
0&2\\
\hline
2&1\\
\hline
2&0\\
\hline
1&2\\
\hline
\end{array}$
\begin{align*}
P_h&=(0,0,\frac{1}{4},0,0,\frac{1}{4},\frac{1}{4},\frac{1}{4},0),\\ P_v&=(0,0,\frac{1}{4},\frac{1}{4},0,0,0,\frac{1}{4},\frac{1}{4}).
\end{align*}
%$P_h=(0,0,\frac{1}{4},0,0,\frac{1}{4},\frac{1}{4},\frac{1}{4},0),\, P_v=(0,0,\frac{1}{4},\frac{1}{4},0,0,0,\frac{1}{4},\frac{1}{4})$
%\begin{equation*}
%\begin{array}{|c|c|}
%\hline
%0&2\\
%\hline
%2&1\\
%\hline
%2&0\\
%\hline
%1&2\\
%\hline
%\end{array}
%\end{equation*}

\item $\begin{array}{|c|c|}
\hline
0&2\\
\hline
1&2\\
\hline
\end{array}$
\begin{align*}
P_h&=(0,0,\frac{1}{4},0,0,\frac{1}{4},\frac{1}{4},\frac{1}{4},0),\\
P_v&=(0,\frac{1}{4},0,\frac{1}{4},0,0,0,0,\frac{1}{2}).
\end{align*}
%$P_h=(0,0,\frac{1}{4},0,0,\frac{1}{4},\frac{1}{4},\frac{1}{4},0),\, P_v=(0,\frac{1}{4},0,\frac{1}{4},0,0,0,0,\frac{1}{2})$
%\begin{equation*}
%\begin{array}{|c|c|}
%\hline
%0&2\\
%\hline
%1&2\\
%\hline
%\end{array}
%\end{equation*}

\item $\begin{array}{|c|c|c|c|}
\hline
2&1&0&2\\
\hline
2&1&1&0\\
\hline
\end{array}$
\begin{align*}
P_h&=(0,0,\frac{1}{4},\frac{1}{4},\frac{1}{8},0,0,\frac{1}{4},\frac{1}{8}),\\ P_v&=(0,\frac{1}{8},\frac{1}{8},\frac{1}{8},\frac{1}{4},0,\frac{1}{8},0,\frac{1}{4}).
\end{align*}
%$P_h=(0,0,\frac{1}{4},\frac{1}{4},\frac{1}{8},0,0,\frac{1}{4},\frac{1}{8}),\, P_v=(0,\frac{1}{8},\frac{1}{8},\frac{1}{8},\frac{1}{4},0,\frac{1}{8},0,\frac{1}{4})$
%\begin{equation*}
%\begin{array}{|c|c|c|c|}
%\hline
%2&1&0&2\\
%\hline
%2&1&1&0\\
%\hline
%\end{array}
%\end{equation*}

\item $\begin{array}{|c|c|c|}
\hline
0&2&1\\
\hline
2&1&0\\
\hline
1&0&2\\
\hline
\end{array}$
\begin{align*}
P_h&=(0,0,\frac{1}{3},\frac{1}{3},0,0,0,\frac{1}{3},0),\\
P_v&=(0,0,\frac{1}{3},\frac{1}{3},0,0,0,\frac{1}{3},0).
\end{align*}
%$P_h=(0,0,\frac{1}{3},\frac{1}{3},0,0,0,\frac{1}{3},0),\, P_v=(0,0,\frac{1}{3},\frac{1}{3},0,0,0,\frac{1}{3},0)$
%\begin{equation*}
%\begin{array}{|c|c|c|}
%\hline
%0&2&1\\
%\hline
%2&1&0\\
%\hline
%1&0&2\\
%\hline
%\end{array}
%\end{equation*}

\item $\begin{array}{|c|c|c|}
\hline
0&2&1\\
\hline
\end{array}$
\begin{align*}
P_h&=(0,0,\frac{1}{3},\frac{1}{3},0,0,0,\frac{1}{3},0),\\
P_v&=(\frac{1}{3},0,0,0,\frac{1}{3},0,0,0,\frac{1}{3}).
\end{align*}
%$P_h=(0,0,\frac{1}{3},\frac{1}{3},0,0,0,\frac{1}{3},0),\, P_v=(\frac{1}{3},0,0,0,\frac{1}{3},0,0,0,\frac{1}{3})$
%\begin{equation*}
%\begin{array}{|c|c|c|}
%\hline
%0&2&1\\
%\hline
%\end{array}
%\end{equation*}

\end{enumerate}
%\caption{C1}
%\end{figure}

\end{proof}

\section{The exact marginal problem in 2D: no-go theorems}
\label{negative}
In view of the results of the previous section, one would imagine that the exact resolution of the marginal problem for scenarios with high local dimension $d$ is merely a matter of computational power. Note that all the marginal sets characterized so far satisfy the following properties:

\begin{enumerate}
\item
They allow us to solve $\ENERGY$ exactly.
\item
When we represent them in probability space, they happen to be convex polytopes, i.e., sets determined by a finite number of linear inequalities. More broadly, they are \emph{semi-algebraic sets}. A subset ${\mathcal{S}}$ of $\R^s$ is a \emph{basic closed semi-algebraic set} if there exist a finite number of polynomials $\{F_i\}_{i=1}^u$ on $\bar{x}\in\R^s$ and the slack vector $\bar{y}\in\R^t$ such that

\begin{align}
\bar{x}\in{\cal S} \mbox{ iff }&\exists\bar{y}\in\R^{t}, s.t.\nonumber\\
&F_i(\bar{x}, \bar{y})\geq 0, i=1,...,u.
\label{basic_closed}
\end{align}
\noindent All subsets of $\R^t$ which one can characterize via linear programming \cite{linear} or the more general tool of semidefinite programming \cite{sdp} fall within this category. A semi-algebraic set would be the union of a finite number of basic closed semi-algebraic sets. However, it is easy to show that any closed convex semi-algebraic set is also basic, so in the following we will use both terms interchangeably.

\end{enumerate}

\noindent Extrapolating, one would expect that the marginal sets of 2D TI distributions with arbitrary $d$ should retain at least one of the above features.

In the following pages we show that this is not the case even when we aim at solving the simplest non-trivial marginal problem: the characterization of $(P_h, P_v)$.

\subsection{Undecidability of $\ENERGY$}
In this section we will prove the following result:

\begin{theo}
There exists no algorithm to solve $\ENERGY(F_h,F_v)$ for arbitrary local dimension $d$ and $F_h,F_v:\{0,...,d-1\}^2\to \{0,1\}$.
\end{theo}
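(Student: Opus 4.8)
The plan is to prove undecidability of $\ENERGY(F_h,F_v)$ by a reduction from a known undecidable problem about tilings or, equivalently, about the existence of TI distributions with prescribed finite support. The paper's introduction already cites the undecidability of deciding whether near-neighbor marginals of a TI distribution with given finite support exist \cite{ruelle1978,SchlijperTiling,Pivato}; this is essentially the classical result that the \emph{nearest-neighbor Wang tiling problem} (the \emph{domino problem}) is undecidable. The natural strategy is therefore to encode an arbitrary Wang tiling instance into a pair of energy functions $F_h,F_v$ taking values in $\{0,1\}$ and to argue that the exact value of $\ENERGY$ distinguishes between the tileable and non-tileable cases.

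Concretely, I would proceed as follows. First, recall that a Wang tile set is specified by a finite set of tiles with colored edges, and a valid tiling of $\Z^2$ requires adjacent horizontal (vertical) edges to match. I would let the local dimension $d$ equal the number of tiles, so that a lattice configuration $a_{\Z^2}$ assigns a tile to each site. Then I would define $F_h(a,b)=0$ precisely when the right edge color of tile $a$ matches the left edge color of tile $b$, and $F_h(a,b)=1$ otherwise; analogously $F_v(a,b)=0$ iff the top edge of the lower tile matches the bottom edge of the upper tile, using the conventions for $h$ and $v$ fixed in Section \ref{basic_prop}. With this encoding, a configuration has zero energy per site if and only if it is a valid Wang tiling, so $\ENERGY(F_h,F_v)=0$ exactly when the tile set admits a valid tiling of the plane, and $\ENERGY(F_h,F_v)>0$ otherwise. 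By the equivalence established at the end of Section \ref{basic_prop} between $\ENERGY$ and the minimum energy per site of the associated local Hamiltonian, this identity holds for the TI marginal formulation as well.

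The remaining step is to argue that distinguishing these two cases is genuinely undecidable. Since the domino problem is undecidable, no algorithm can decide, given the tile set (equivalently, given $F_h,F_v$), whether $\ENERGY(F_h,F_v)=0$. Hence no algorithm can compute $\ENERGY(F_h,F_v)$ exactly, for if such an algorithm existed we could run it and test whether its output is $0$, thereby deciding the domino problem. I would state carefully that exact computation is what fails: the approximation algorithm of Section \ref{properties} already shows $\ENERGY$ can be approximated to arbitrary additive accuracy, so the obstruction is precisely the lack of an a priori gap separating the zero-energy (tileable) case from the strictly positive (non-tileable) case.

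The main obstacle I anticipate is not the reduction itself, which is essentially bookkeeping, but ensuring the encoding respects all the structural constraints of the present setting: the energy functions must be restricted to the two allowed relative positions $h$ and $v$ (nearest neighbors only), the values must lie in $\{0,1\}$, and the undecidable source problem must be one whose $\Z^2$ tileability (rather than tileability of a finite or half-infinite region) is undecidable. I would therefore take care to invoke Berger's theorem on the undecidability of the full-plane domino problem, and to confirm that a zero-energy TI marginal corresponds to an actual tiling of $\Z^2$ rather than merely a locally consistent finite patch — this is where translation invariance and the passage to the thermodynamic limit, via the configuration-to-marginal correspondence, must be used rather than glossed over.
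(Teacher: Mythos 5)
Your proposal is correct and follows essentially the same route as the paper: a reduction from Berger's domino problem in which $F_h,F_v$ are indicator functions of edge-compatibility, so that the optimal energy per site detects tileability of the plane, with the converse direction handled by extracting a nested sequence of valid finite patches from the support of a zero-energy TI distribution. The only cosmetic difference is the sign convention (you penalize invalid pairs so the threshold is $\ENERGY=0$, while the paper rewards valid pairs and tests for the value $2$), which does not change the argument.
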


The proof of this theorem, as well as the proof of the next no-go result, will rely heavily on certain mathematical results on tilings of the plane, so let us first introduce a basic vocabulary.

Let $A$ be a finite alphabet. Any function $f:\Z^2\to A$ defines a \emph{tiling of the plane} with the set of tiles $A$. Any pair of subsets $\T=(\T_h, \T_v)$ of $A\times A$ defines a \emph{tiling rule}. We say that a tiling $f$ respects the tiling rule $\T$ if, for all $(x,y)\in\Z^2$,

\be
(f(x,y),f(x+1,y))\in \T_h, (f(x,y),f(x,y+1))\in \T_v.
\ee
\noindent If the rule $\T$ is implicitly known, we call $f$ a valid tiling.

Certain rules $\T$ do not admit any valid tiling, e.g.: take $A=\{0,1\}$ and $\T_h=\{(0,1)\},\T_v=A\times A$. The problem of deciding if a given rule $\T$ admits a valid tiling was first raised by Wang \cite{wang}, and proven undecidable by Berger \cite{berger}, who also showed the existence of tiling rules $\T$ admitting just aperiodic tilings. The construction used by Berger to derive the latter result uses 20426 tiles. This number has been decreasing over the years as new aperiodic tiling rules requiring less tiles were found. The current record, held by Jeandel and Rao, uses only 11 tiles~\cite{jeandel}.

We will prove that $\ENERGY(F_h,F_v)$ is undecidable by reducing it to the general tiling problem. Let $\T$ be a tiling rule, and define the input of $\ENERGY$ as

\begin{align}
F_h(a,b)=&1,\mbox{ if } (a,b)\in\T_h,\nonumber\\
&0,\mbox{ otherwise},\nonumber\\
F_v(a,b)=&1,\mbox{ if } (a,b)\in\T_v,\nonumber\\
&0,\mbox{ otherwise}.
\end{align}

\noindent We claim that $\ENERGY(F_h,F_v)=2$ iff $\T$ admits a valid tiling. Consequently, the $\ENERGY$ problem is undecidable.

First, suppose that $\T$ admits a valid tiling. We choose an $n\times n$ square of this tiling, and apply symmetrization to obtain the marginals $P^{(n)}_h,P^{(n)}_v$ of a 2D TI distribution. Those will satisfy $F_h(P^{(n)}_h)+F_v(P^{(n)}_v)=2-O(1/n)$. That way, we obtain a sequence of TI marginals $(P^{(n)}_h,P^{(n)}_v)_n$ whose energy per site tends to $2$. The closure of the set of TI marginals implies that there exists a TI marginal $(P^\star_h, P^\star_v)$ with $F_h(P^\star_h)+F_v(P^\star_v)=2$. Since $F_h(P),F_v(P)\leq 1$ for any distribution, this implies that $\ENERGY(F_h,F_v)=2$.

Conversely, suppose that $\ENERGY(F_h,F_v)=2$. Then, there exists a 2D TI distribution $P$ saturating the bound. Given $P_{|n\times n|}$, take any configuration $a_{|n\times n|}$ such that $P_{|n\times n|}\not=0$. Obviously, $a_{|n\times n|}$ is a valid tiling for $|n\times n|$. Now, consider $P_{|(n+1)\times (n+1)|}$ and take any configuration $a_{|(n+1)\times (n+1)|}$ such that $a_{|n\times n|}$ is an inner square of $a_{|(n+1)\times (n+1)|}$ and $P_{|(n+1)\times (n+1)|}(a_{|(n+1)\times (n+1)|})\not=0$. That such a configuration must exist follows from the fact that $\sum_{a'}P_{|(n+1)\times (n+1)|}(a_{|n\times n|},a')=P_{|n\times n|}(a_{|n\times n|})\not=0$, where $a'$ denotes the local variables of the inner border of the $|(n+1)\times (n+1)|$ square. Iterating, we obtain a sequence $(a_{|n\times n|})_n$ of valid tiles for ever-growing squares with the particularity that, for all $n$, $a_{|n\times n|}$ is contained in $a_{|(n+1)\times (n+1)|}$. The desired tiling of the plane is hence given by the function $f$ assigning to each point $(x,y)$, with $|x|<n,|y|<n$ the corresponding tile in $a_{|n\times n|}$.

\subsection{The set of TI nearest-neighbors marginals is not a semi-algebraic set}

The goal of this section is to prove the following theorem.

\begin{theo}
\label{egregius}
For $d=2947$, the set of nearest-neighbor TI marginals $P_h,P_v$ is not semi-algebraic. Moreover, some pieces of its boundary are smoothly curved.
\end{theo}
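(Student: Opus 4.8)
The plan is to translate the geometric question about the boundary of the marginal set into a combinatorial question about the frequencies of patterns in carefully engineered two-dimensional subshifts of finite type (SFTs), exploiting the energy/marginal duality established above. Recall that every linear functional $(F_h,F_v)$ exposes a face of the (convex, closed) set of marginals, and that $\ENERGY(F_h,F_v)$ is precisely the value of the support function in that direction. Choosing $F_h,F_v$ as the indicator penalties of a tiling rule $\T=(\T_h,\T_v)$, exactly as in the undecidability proof, the exposed face consists of those nearest-neighbor marginals arising from translation-invariant measures supported on valid tilings of $\T$. Thus, to control the shape of $\partial M_d$ (writing $M_d$ for the marginal set at local dimension $d$), it suffices to design a \emph{single} tile set $A$ with $|A|=2947$ whose invariant measures project, under $P\mapsto(P_h,P_v)$, onto a set exhibiting the two desired pathologies. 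Since semi-algebraicity is preserved under linear projection to a fixed two-dimensional coordinate plane, it is enough to exhibit a planar linear image of $M_d$ that is not semi-algebraic and whose boundary contains both flat facets and a strictly curved smooth arc.

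First I would build the \emph{smoothly curved} piece. The idea is to include in $A$ a gadget admitting a one-parameter family of (quasi-)periodic valid tilings in which the density $\alpha\in[0,1]$ of a distinguished local pattern can be tuned continuously, while a second pattern is forced to occur with a density governed by a quadratic relation in $\alpha$ (for instance, because the two patterns must share a common sublattice whose spacing is dictated by $\alpha$). Reading off these densities with two linear functionals $x=\sum c_h P_h$ and $y=\sum c_v P_v$, the achievable region has as part of its boundary the parametric curve $(x(\alpha),y(\alpha))$ with $y''\neq 0$, i.e.\ a strictly convex, smooth arc. This already shows $M_d$ is not a polytope; the presence of flat facets elsewhere (e.g.\ the segments joining the exposed vertices coming from the monochromatic deterministic tilings) shows it is not smooth everywhere either.

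The crux is the \emph{non-semi-algebraic} claim, and this is where the bulk of the work, and the reason for the large constant $2947$, lies. Here I would embed a hierarchical, Robinson-type aperiodic skeleton capable of hosting a programmable ``counter'', so that valid tilings realize, for every positive integer $n$, a configuration in which a marked tile appears with frequency exactly $f(n)$ for an explicit, strictly monotone sequence $f(n)\to f_\infty$. Each such value is pinned by a suitable linear functional, exposing a distinct isolated extreme point $v_n$ of $M_d$. A closed convex semi-algebraic set has a semi-algebraic, hence o-minimally tame, extreme-point set, which cannot contain infinitely many isolated vertices (a definable discrete set must be finite); the family $\{v_n\}_{n}$ violates this, yielding non-semi-algebraicity. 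The main obstacle is the explicit construction: one must (i) pack into $2947$ tiles both the tunable-density gadget and the self-similar counting machinery while preventing the two substructures from interfering, (ii) compute the projected frequencies $(P_h,P_v)$ of the relevant invariant measures exactly enough to locate each face, and (iii) prove rigorously that the resulting sequence of exposed faces is genuinely infinite and discrete, so that the tameness of semi-algebraic convex bodies is truly violated. Checking that the tuned-density family produces an arc of nonvanishing curvature, rather than degenerating to a segment, is a delicate but routine computation once the gadget is fixed.
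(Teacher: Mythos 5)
Your high-level frame --- encode a tiling rule as the zero-energy face of $\ENERGY(F_h,F_v)$, restrict to marginals supported on valid tilings, project onto two linear functionals, and invoke Tarski--Seidenberg to pull non-semi-algebraicity back to the full marginal set --- is exactly the paper's. But the two mechanisms you propose to produce the actual pathologies are not the paper's, and neither is constructed. The paper obtains \emph{both} claims from a \emph{single} explicit object: Kari's correspondence between Wang tiles and piecewise affine maps is used to build a concrete $2947$-tile set whose valid tilings encode orbits of the rotation $z\mapsto M(z-\hat{c})+\hat{c}$ with $\cos\phi_0=4/5$; Niven's theorem makes $\phi_0$ an irrational multiple of $\pi$, a Weyl-type equidistribution lemma converts row averages into integrals, and the reachable region for the two one-site functionals $(\omega,\eta)$ becomes the convex hull of a curve which, after a rational change of variables, contains the graph of $\frac{1}{\pi}\arccos(\nu)$. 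That arc is simultaneously smoothly curved (killing the polytope property) and transcendental (killing semi-algebraicity, since the product of the defining polynomials would have to vanish identically on the graph of $\arccos$). The constant $2947$ is the tile count of precisely that construction; your route cannot recover it.

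The concrete gaps: (i) your curved-arc gadget is specified only as ``a density governed by a quadratic relation,'' but a quadratic arc is semi-algebraic, so it contributes nothing to the main claim; and even for the non-polytope claim you give no construction of an SFT whose invariant measures realize a continuously tunable density with a forced nonlinear coupling --- this is the hard part, and the standard tool for it is exactly the Kari machinery you do not invoke. (ii) Your non-semi-algebraicity argument via infinitely many isolated extreme points is logically sound in the abstract (the extreme-point set of a semi-algebraic convex body is semi-algebraic by quantifier elimination, hence has finitely many connected components), and it is a genuinely different strategy from the paper's transcendence argument; but the Robinson-style counter realizing frequencies $f(n)$ for every $n$ is entirely hypothetical, and even granting it, exhibiting one valid tiling per $n$ does not make $v_n$ an isolated extreme point of the projected marginal set: for that you must control \emph{all} shift-invariant measures of the SFT (including convex combinations and measures supported on unforeseen valid tilings) and show they do not fill in or accumulate around the $v_n$. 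Nothing in the proposal addresses this. As written, the proposal is a research plan whose feasibility within $2947$ tiles is unsupported, not a proof.
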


Before proving this theorem, we want to remark that the value of $d=2947$ comes from the number of tiles required by our construction below. It is not necessarily tight: for all we know, the critical value $d$ which marks the transition from polytopes to curved sets could be as low as $4$. It is possible that, similar to the history of the minimum set of Wang tiles discussed above, our upper bound $d=2947$ be improved in the future. A consequence of Theorem \ref{egregius} is that, for $d\geq 2947$, the set of nearest-neighbor TI marginals cannot be characterized by either linear or semidefinite programming \cite{sdp}.

\begin{proof}
The idea of the proof is as follows: given a finite alphabet $A$, we will define a tiling rule $\T$, and we will consider the set $\P$ of all TI marginals $(P_h(a,b),P_v(a,b))$, with $a,b\in A$ such that

\be
\sum_{(a,b)\in\T_h}P_h(a,b)=\sum_{(a,b)\in\T_v}P_v(a,b)=1.
\ee

That is, we will consider the intersection between the set of TI marginals and two planes defined by integer coefficients. If the set of all TI marginals $(P_h,P_v)$ were semi-algebraic or a polytope, then so would $\P$.

We will then define two linear 1-site parameters

\be
\omega\equiv\sum_{a\in A}\hat{\omega}(a)P_{(0,0)}(a), \eta\equiv\sum_{a\in A}\hat{\eta}(a)P_{(0,0)}(a),
\ee

\noindent and characterize the set ${\cal S}$ of feasible values $(\omega,\lambda)$ in $\P$. We will find that the boundary of ${\cal S}$ contains both flat and a smoothly curved pieces, so we will conclude that the set of nearest-neighbor TI marginals with variables of dimension $d=|A|$ does not form a polytope. Similarly, by proving that ${\cal S}$ is not a basic closed semi-algebraic set, we will demonstrate that neither is the set of TI marginals.

The proof relies heavily on the connection between aperiodic tilings and immortal points of dynamical systems first pointed out in~\cite{kari1} and later extended in~\cite{kari2}. Before giving our implementation which proves the statement above we will briefly review the general construction given in~\cite{kari1,kari2}.

Let $M$ be a $2\times 2$ matrix with rational coefficients; and $c$, a rational vector in $\R^2$. We will consider a number $u$ of unit squares $\{R^i\}_{i=1}^u$ in $\R^2$, each of them defined by their integer corners $U^i=\{(m^i, n^i),(m^i+1, n^i),(m^i, n^i+1),(m^i+1, n^i+1)\}$. Let $R\equiv \cup_{i=1}^uR^i$. The tiles used in this construction are all derived from the prototile given in Fig.~\ref{prototile}. A tile is labeled by four vectors in $\R^2$, called $t, b, l, r$ (\emph{top, bottom, left and right}) and a \emph{region} $i\in \{1,...,u\}$. The components of $l$ and $r$ are rational, $t\in U^i$ and $b\in U^j$ for some $j=1,...,u$. A valid tile also satisfies the relation $f(t)+l=b+r$, where $f:\R^2\to\R^2$ is the affine function $f(\bar{z})\equiv M\bar{z}+c$.

Given any set $A$ (finite or infinite) of tiles so labeled and satisfying such constraints, the tiling rules we will consider are:
\begin{enumerate}
\item
To the right of any tile with region $i$ and right vector $r$, there can only be a tile with region $i$ and left vector $r$.
\item
To the bottom of any tile with bottom vector $b$, there can only be a tile with top vector $b$.
\end{enumerate}

\begin{figure}
  \centering
  \includegraphics[width=0.2\textwidth]{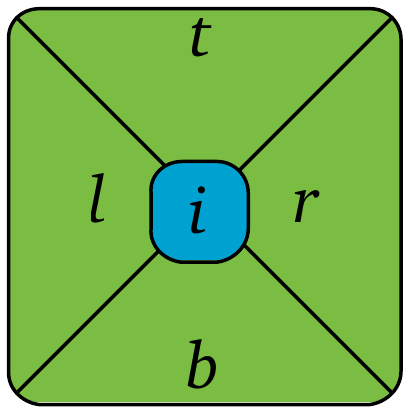}
  \caption{\textbf{The Kari prototile.}}
  \label{prototile}
\end{figure}

The first rule can be seen to imply that, for any valid row of tiles $k=1,...,m$ of the form $(i, t_k,b_k,l_k,r_k)$,

\be
f(\langle t\rangle)+\frac{l_1}{m}=\langle b\rangle+\frac{r_m}{m},
\ee
\noindent where $\langle t\rangle,\langle b\rangle$, are, respectively, the arithmetic means of the top and bottom vectors of the $n$ tiles. Note that, by convexity, $\langle t\rangle\in R^i$, $\langle b\rangle\in \mbox{conv}(R)$.

\begin{figure}
  \centering
  \includegraphics[width=0.6\textwidth]{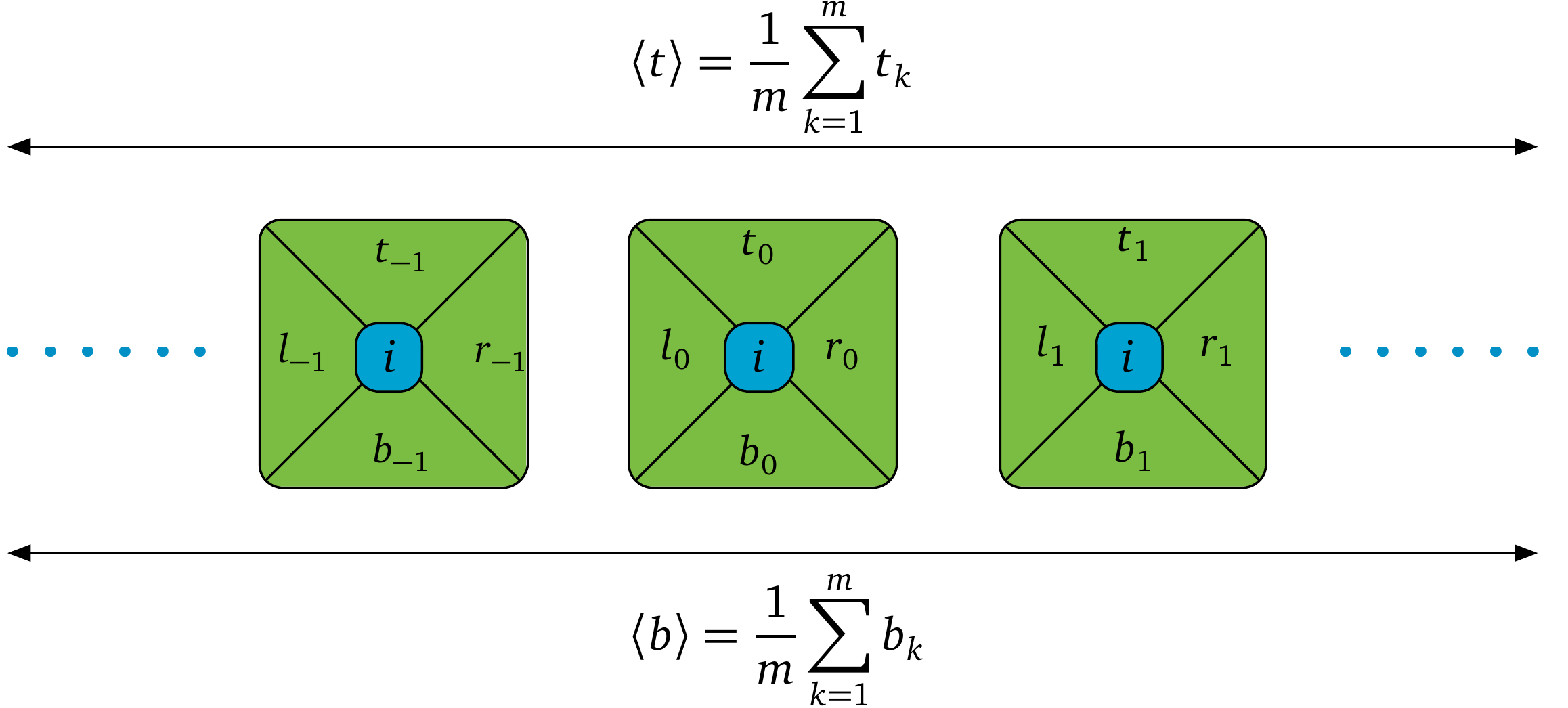}
  \caption{\textbf{Law of averages for a segment of Kari tiles.} For large $m$, $f(\langle t\rangle)\approx\langle b\rangle$.}
  \label{row}
\end{figure}

Let $A$ be such that the set $L$ of possible left vectors $l$, viewed as a subset of $\R^2$, is bounded and equal to the set of feasible right vectors. It follows that, for $n$ sufficiently large, we can view each valid row as a vector $\bar{z}\in R^i$ undergoing the transformation $\bar{z}\to M\bar{z}+c$. Moreover, by the second tiling rule, given the validly tiled rectangle $|m\times n|$, with $m\gg n\gg 1$, the sequence of top vector averages on each row $j$ can be interpreted as an orbit inside $R$ given by $(\bar{z}_j=f^{-j}(\bar{z}_{0})+O(\frac{n}{m}):j=-n,...,n)$. In addition, the region of the $j^{th}$ row indicates which square $R^1,...,R^u$ contains $\bar{z}_j$. Taking the limit $n\to\infty, \frac{n}{m}\to 0$, a valid tiling is only possible if $f^j(\bar{z}_0)\in R$ for all $j\in\Z$. $\bar{z}_0$ is then called an \emph{immortal point} of the \emph{dynamical system} given by $M$, $c$ and $R$.

The breakthrough in \cite{kari1,kari2} was to realize that, conversely, there always exists a \emph{finite} alphabet $A$ such that any immortal point of the system $(M,c, R)$ can be represented with a valid tiling of $A$. 

Let $\bar{v}\in\R^2$, and, for any $k\in\Z$, define $A_k(\bar{v})\equiv \lfloor k\bar{v}\rfloor$, where the floor is taken for each coordinate of $\bar{v}$. Now, denote $B_k(\bar{v})\equiv A_k(\bar{v})-A_{k-1}(\bar{v})$. It can be shown that, if $\bar{v}\in R^j$, then $B_k(\bar{v})\in U^j$. Similarly, $\frac{1}{N}\sum_{k=1}^N B_{k+h}(\bar{v})=\bar{v}\pm O(1/N)$.

Now, suppose that, for any $i=1,...,u$, and any immortal point $\bar{v}\in R^i$, $A$ contains the set $A'$ of all tiles $T_k(\bar{v})$ with region $i$ and top, bottom, left and right vectors of the form:

\begin{align}
t_k(\bar{v})&=B_k(\bar{v}),\nonumber\\
b_k(\bar{v})&=B_k(f(\bar{v})),\nonumber\\
l_k(\bar{v})&=f(A_{k-1}(\bar{v}))-A_{k-1}(f(\bar{v}))+(k-1)c,\nonumber\\
r_k(\bar{v})&=f(A_k(\bar{v}))-A_{k}(f(\bar{v}))+kc,
\end{align}
\noindent for all $k\in\Z$. It can be verified that these tiles satisfy the conditions $f(t)+l=b+r$, $t\in U^i$, $b\in \cup_{j=1}^uU^j$.

Since $r_k(\bar{v})=l_{k+1}(\bar{v})$, the tiles $T_{k}(\bar{v}),T_{k+1}(\bar{v}),T_{k+2}(\bar{v}),...,T_{k+m}(\bar{v})$ form a valid row for any $m$. We can extend this row below by placing another row of the form $T_{k}(f(\bar{v})),T_{k+1}(f(\bar{v})),T_{k+2}(f(\bar{v})),...$, with $M\bar{v}\in R^{i'}$. Iterating, we can tile the whole plane in this fashion, provided that $\bar{v}$ is an immortal point. From the properties of $B_k(\bar{v})$, the averages of the inputs of each row $j$ tend to $f^j(\bar{v})$ in the limit $n\to\infty$, $\frac{n}{m}\to 0$.

It rests to see that $A'$ is a finite set. The top and bottom vectors of each tile belong to $\cup_{i=1}^uU^i$, and so they can only take finitely many values. As for the left or right vectors of the tile, they must belong to the set $L'=\{r_k(\bar{v}):\bar{v}\in\R^2,k\in\Z\}$. Let $l\in L'$ and denote by $l_s$ its $s^{th}$ component. Using the relations $x-1\leq\lfloor x\rfloor\leq x$ repeatedly, we have that  

\be
\mu^{-}_s:=-\sum_{j^{+}}M_{s,j}+c_s\leq l_s\leq 1-\sum_{j^{-}}M_{s,j}+c_s=:\mu^{+}_s
\ee
\noindent where $j^{+}$ ($j^{-}$) ranges over all those $j$ such that $M_{s,j}>0$ ($M_{s,j}<0$).

Call $m_s$ the common denominator $m_s$ of the rational numbers $\{M_{s,j}\}_j, c_s$ (remember that $M, c$ are assumed to be rational). From the definition of $L'$, it follows that $l\in L'$ must satisfy $\left(\begin{array}{cc}m_1&0\\0&m_2\end{array}\right)l\in\Z^2$. Define then the set $L=\{l:m_sl_s\in\Z, \mu^{-}_s\leq l_s\leq \mu^{+}_s, s=1,2\}$. Clearly $L$ is bounded and contains $L'$.

In order to generate $A$, we go through all regions $i,j=1,...,u$ and all combinations of top and bottom vectors $t\in U^i,b\in U^j$, and any left vector $l\in L$ and verify that the right vector $r=f(t)+l-b$ also belongs to $L$. If it does, then we add the tile $(i,t,l,r)$ to the definition of $A$. That way, we end up with a finite alphabet that, via the above tiling rules, can describe all immortal orbits of the dynamical system $(M,c,R)$. This is a simplification of the construction proposed in \cite{kari2} to simulate a dynamical system where the transformation $f$ may depend on the region $i$, i.e., $z\to M^iz+c^i$.

To prove our result, we take $M, c$ to be

\be
M=\left(\begin{array}{cc}\frac{4}{5}&-\frac{3}{5}\\\frac{3}{5}&\frac{4}{5}\end{array}\right), c=\left(\begin{array}{c}\frac{1}{5}\\\frac{1}{5}\end{array}\right),
\ee
\noindent and consider the regions $R^1, R^2$ corresponding to the unit squares $[0,-1]\times [0,1]$, $[0,1]\times [0,1]$, respectively. Using the above construction, we find that we can simulate the dynamical system defined by $(M, c, R)$ via an alphabet of $2947$ tiles.

The action of $f$ over a vector $\bar{z}$ can be rewritten $f(z)= M(z-\hat{c})+\hat{c}$, with $\hat{c}=(-1/5,2/5)$, where $M$ implements a counter-clockwise rotation by an angle $\phi_0=\arccos\left(\frac{4}{5}\right)$. Sequential applications of $f$ over an initial vector $z_0$ have thus the effect of rotating the vector an angle $\phi_0$ with respect to the point $\hat{c}$, see Figure \ref{giros}.

\begin{figure}
  \centering
  \includegraphics[width=0.6\textwidth]{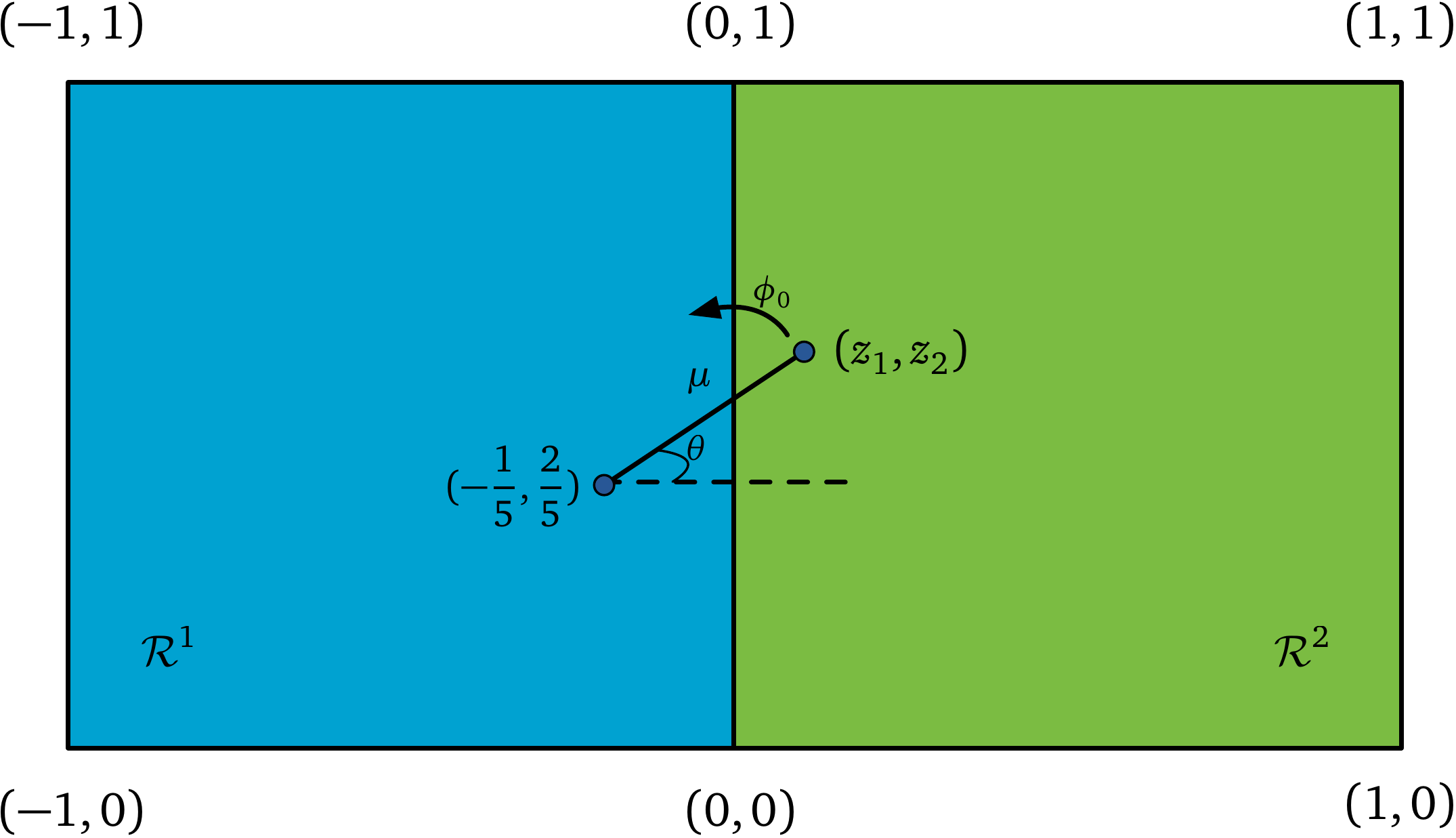}
  \caption{\textbf{The action of $f$.} The effect of the affine transformation $f$ over an arbitrary vector $z$ is to rotate it by an angle $\phi_0$ with respect to the point $\hat{c}=(-1/5,2/5)$.}
  \label{giros}
\end{figure}

As we will see later, $\phi_0$ is an irrational multiple of $2\pi$. Hence, by applying $M$ sequentially, we can induce a rotation arbitrarily close to any angle $\theta$. It follows that a point $\bar{z}$ is immortal in the dynamical system given by $(M,c,R)$ iff $\|\bar{z}-\hat{c}\|_2\leq \frac{2}{5}$.

We are interested in the following extensive quantities:
\begin{align}
&\Omega\equiv\frac{1}{(2m+1)(2n+1)}\sum_{x=-m}^{m}\sum_{y=-n}^{n}\left(t_1(x,y)+\frac{1}{5}\right)\delta_{i(x,y),2},\label{witness1}\\
&H\equiv\frac{1}{(2m+1)(2n+1)}\sum_{x=-m}^{m}\sum_{y=-n}^{n}\delta_{i(x,y),2},
\label{witness2}
\end{align}
where $i(x,y),t(x,y),b(x,y),l(x,y),r(x,y)$ denote the parameters specifying the tile at position $(x,y)$; $t_1$ is the first coordinate of the top vector $t$; and $\delta$, the Kronecker delta. We will show that, if $n\to\infty$ and $\frac{n}{m}\to 0$, the set of feasible $(\Omega,H)$ is parametrized by the curve 

\be
\left\{\left(\frac{\mu}{\pi}\sqrt{1-\left(\frac{1}{5\mu}\right)^2}, \frac{1}{\pi}\arccos\left(\frac{1}{5\mu}\right)\right): \frac{1}{5}\leq \mu\leq \frac{2}{5}\right\}.
\label{curvas}
\ee

Intuitively, the vector $z$ defined at the top side of each row of the tiling is turning by an amount $\phi_0$ from row to row. The witness (\ref{witness1}) corresponds to the average of $\bar{z}-\hat{c}$'s first coordinate in the region $R^2$. That is,
\be
W=\lim_{N\to\infty}\frac{1}{2N+1}\sum_{k=-N}^N\mu\cos(\varphi+k\phi_0)\chi_{[-\arccos\left(\frac{1}{5\mu}\right),\arccos\left(\frac{1}{5\mu}\right)]}(\varphi+k\phi_0),
\ee

\noindent where $\varphi$ is the angle of $\bar{z}$ at row $j=0$ with respect to the $\hat{x}$ axis; $\mu=\|z-\hat{c}\|_2$ and $\chi_O(p)$ is the characteristic function that equals $1$ when $p\in O$ and $0$ otherwise. Since $\phi_0$ is not congruent, we expect the above expression to converge to

\be
\frac{1}{2\pi}\int_{-\arccos\left(\frac{1}{5\mu}\right)}^{\arccos\left(\frac{1}{5\mu}\right)}d\theta \mu\cos(\theta)=\frac{\mu}{\pi}\sqrt{1-\left(\frac{1}{5\mu}\right)^2},
\label{goal1}
\ee

\noindent for $\frac{1}{5}\leq\mu\leq\frac{2}{5}$ and $0$ otherwise. Analogously, the second witness (\ref{witness2}) measures the presence of vector $z$ in $R^2$, and so it should converge to

\be
\frac{1}{2\pi}\int_{-\arccos\left(\frac{1}{5\mu}\right)}^{\arccos\left(\frac{1}{5\mu}\right)}d\theta =\frac{1}{\pi}\arccos\left(\frac{1}{5\mu}\right),
\label{goal2}
\ee
\noindent for $\frac{1}{5}\leq\mu\leq\frac{2}{5}$ and $0$ otherwise. The coordinates of an arbitrary feasible point $(\Omega,H)$ will thus belong to the trajectory (\ref{curvas}).

To justify eqs. (\ref{goal1}), (\ref{goal2}), though, we need to prove a result that relates the integration of a function with the sampling over $(\varphi+k\phi_0)_k$.

\begin{lemma}
Let $\phi_0$ be such that $m\phi_0\not=0\mbox{ }(\mbox{mod }2\pi)$ for all $m$, and let $g(\phi)$ be any piecewise continuous bounded function in $\phi\in[-\pi,\pi]$. Then, for any $\varphi\in[-\pi,\pi]$,

\be
g^\star:=\lim_{N\to\infty}\frac{1}{N}\sum_{k=1}^Ng(\varphi+k\phi_0)=\frac{1}{2\pi}\int_{-\pi}^{\pi} d\theta g(\theta).
\label{integral}
\ee
\end{lemma}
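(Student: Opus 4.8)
The plan is to recognize the identity (\ref{integral}) as an instance of Weyl's equidistribution theorem and to prove it by the classical three-step approximation argument: establish it first for complex exponentials, extend by linearity to trigonometric polynomials, then to arbitrary continuous periodic functions by uniform approximation, and finally to piecewise continuous bounded functions by a sandwiching argument. Throughout, $g$ is regarded as a $2\pi$-periodic function obtained by periodically extending its restriction to $[-\pi,\pi]$.

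First I would test the claim on the building blocks $g(\theta)=e^{in\theta}$, $n\in\Z$, where every quantity is explicitly computable. For $n=0$ both sides equal $1$. For $n\neq 0$ the Cesàro average is a geometric series,
\[
\frac{1}{N}\sum_{k=1}^N e^{in(\varphi+k\phi_0)}=\frac{e^{in\varphi}e^{in\phi_0}}{N}\cdot\frac{e^{inN\phi_0}-1}{e^{in\phi_0}-1}.
\]
The hypothesis $m\phi_0\neq 0\ (\mathrm{mod}\ 2\pi)$ for all $m\neq 0$ guarantees $e^{in\phi_0}\neq 1$, so the denominator never vanishes while the numerator stays bounded; hence the whole expression is $O(1/N)$ and tends to $0$, matching $\frac{1}{2\pi}\int_{-\pi}^{\pi}e^{in\theta}\,d\theta=0$. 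By linearity the identity then holds for every trigonometric polynomial.

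Next I would upgrade to continuous functions. By the Stone--Weierstrass (or Fejér) theorem, any continuous $2\pi$-periodic $g$ can be approximated uniformly by a trigonometric polynomial $p$ with $\|g-p\|_\infty<\epsilon$. Since the finite averages of $g$ and $p$ then differ by at most $\epsilon$ uniformly in $N$, and likewise $\left|\frac{1}{2\pi}\int(g-p)\right|\leq\epsilon$, letting $N\to\infty$ along $p$ and then $\epsilon\to 0$ yields (\ref{integral}) for all continuous periodic $g$.

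The final, and only genuinely delicate, step is to relax continuity to piecewise continuity. Here I exploit that $g$ is bounded with finitely many discontinuities, hence Riemann integrable. Given $\epsilon>0$, I sandwich $g$ between continuous periodic functions $g_-\leq g\leq g_+$ which coincide with $g$ outside small neighborhoods of the jump points and satisfy $\frac{1}{2\pi}\int_{-\pi}^{\pi}(g_+-g_-)\,d\theta<\epsilon$. Monotonicity of the averages gives
\[
\frac{1}{N}\sum_{k=1}^N g_-(\varphi+k\phi_0)\leq\frac{1}{N}\sum_{k=1}^N g(\varphi+k\phi_0)\leq\frac{1}{N}\sum_{k=1}^N g_+(\varphi+k\phi_0),
\]
and taking $\liminf$ and $\limsup$ while applying the continuous case to $g_\pm$ traps both between $\frac{1}{2\pi}\int g_-$ and $\frac{1}{2\pi}\int g_+$, hence within $\epsilon$ of $\frac{1}{2\pi}\int g$. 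As $\epsilon$ is arbitrary, the limit $g^\star$ exists and equals $\frac{1}{2\pi}\int_{-\pi}^{\pi}g(\theta)\,d\theta$. The main obstacle is precisely the control of the jump points: because the sample points $\varphi+k\phi_0$ could in principle cluster near a discontinuity, one cannot argue pointwise, and it is the continuous majorant/minorant sandwich—whose integral gap is made small—that rescues the argument. (The irrationality of $\phi_0/2\pi$, which I would justify separately for the specific $\phi_0=\arccos(4/5)$ via a rationality obstruction, is what underwrites the exponential estimate above.)
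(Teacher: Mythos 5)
Your proof is correct and follows essentially the same route as the paper's: verify the identity on complex exponentials (where the hypothesis $m\phi_0\neq 0\ (\mathrm{mod}\ 2\pi)$ kills the geometric sum), pass by uniform approximation to a dense class of nice periodic functions, and handle the discontinuities of $g$ by sandwiching it between such functions whose integrals differ by at most $\epsilon$. The only difference is cosmetic: your intermediate class is continuous functions approximated via Fej\'er/Stone--Weierstrass, whereas the paper invokes uniform Fourier convergence of piecewise differentiable functions — your variant is in fact the slightly safer one, since uniform convergence of the Fourier series requires continuity in addition to piecewise differentiability (otherwise the Gibbs phenomenon obstructs it), a point the paper glosses over but which its final sandwiching step absorbs anyway.
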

\begin{proof}
Note that, for any $m\in\Z$,

\begin{align}
\lim_{N\to\infty}\frac{1}{N}\sum_{k=1}^Ne^{m(\varphi+k\phi_0)}=&0,\mbox{ for } m\not=0,\nonumber\\
&1\mbox{ for } m=0.
\label{Fourier}
\end{align}

\noindent Let $g(\theta)$ be a periodic, piecewise \emph{differentiable} bounded function. Then it converges uniformly under a Fourier expansion, i.e., for any $\epsilon>0$, there exists $R$ such that $|g(\theta)-g_R(\theta)|<\epsilon$ for all $\theta\in[-\pi,\pi]$, where $g_R(\theta)=\sum_{m=-R}^R c_m e^{m\theta}$ and $\{c_m\}_m$ are the Fourier coefficients of $g(\theta)$.

Now, from eq. (\ref{Fourier}), we have that

\be
g^\star_R=c_0=\frac{1}{2\pi}\int_{-\pi}^{\pi}d\theta g(\theta), \forall R.
\ee
\noindent Since the $\star$ process is, in fact, an average, $|g^\star_R-g^\star|<\epsilon$, and we conclude that eq. (\ref{integral}) holds for all piece-wise derivable bounded functions. The general result can be obtained by noticing that, for any piece-wise continuous function $g$, there exist two sequences of piece-wise derivable functions $(g^{u}_n)_n$, $(g^{d}_n)_n$ such that $g^{d}_n(\theta)\leq g(\theta)\leq g^{u}_n(\theta)$ for all $n,\theta$ and $\lim_{n\to\infty}\int_{-\pi}^\pi d\theta g^d_n(\theta)=\lim_{n\to\infty}\int_{-\pi}^\pi d\theta g^u_n(\theta)$.
\end{proof}

To apply the above lemma, we still need to show that $\phi_0$ is congruent. This follows from $\sin(\phi_0)=\frac{3}{5}$ and Niven's theorem \cite[pp.41]{niven}, that states that the only angles $0\leq\phi\leq \pi/2$ with $\phi=\frac{m}{n}\pi$ and rational sine are $0$, $\frac{\pi}{6}$ and $\frac{\pi}{2}$.

We have just shown that, in the limits $n\to\infty$, $\frac{n}{m}\to 0$, the vector of feasible values $(\Omega,H)$ in an $n\times m$ valid tiling is parametrized by eq. (\ref{curvas}). What does this have to do with TI marginals? Consider a random variable taking values in the tile set $A$, and a TI marginal $(P_h, P_v)$ satisfying the constraint:

\be
\sum_{(a,b)\in\T_h}P_h(a,b)=\sum_{(a,b)\in\T_v}P_v(a,b)=1.
\label{constance}
\ee

\noindent Consider also the linear functionals given by

\begin{align}
&\omega(P_h,P_v)\equiv\sum_{a\in A}P(a)\left(t_1(a)+\frac{1}{5}\right)\delta_{i(a),2},\nonumber\\
&\eta(P_h,P_v)\equiv\sum_{a\in A}P(a)\delta_{i(a),2},
\label{witnessP}
\end{align}
\noindent where $t_1(a)\in\{0,\pm 1\}$ $(i(a)\in \{1,2\})$ denotes the 2nd coordinate of the top side (the region) of tile $a$.

By definition, for any $n, m$ there exists a LTI distribution $P_{n\times m}$ with nearest-neighbor marginals $(P_h, P_v)$. $P_{n\times m}$ can be seen as a convex combination of tilings $a_{n\times m}^k$ with weight $p_k$ of the set $n\times m$. Moreover, due to condition (\ref{constance}), all of them must be valid tilings. Hence,

\be
\omega(P_h,P_v)=\sum_kp_k \Omega(a_{n\times m}^k), \eta(P_h,P_v)=\sum_kp_k H(a_{n\times m}^k).
\ee

\noindent In the limit $n\to\infty$, $\frac{n}{m}\to 0$, the point $(\omega,\eta)$ belongs to the convex hull ${\cal S}$ of the curve (\ref{curvas}). It can be verified that $\frac{d^2\eta}{d\omega^2}\leq 0$, i.e., the curve is concave. Hence the boundary of ${\cal S}$ is given by curve (\ref{curvas}) and the segment joining its start and end points, see Figure \ref{non_polytope}.

\begin{figure}
  \centering
  \includegraphics[width=0.6\textwidth]{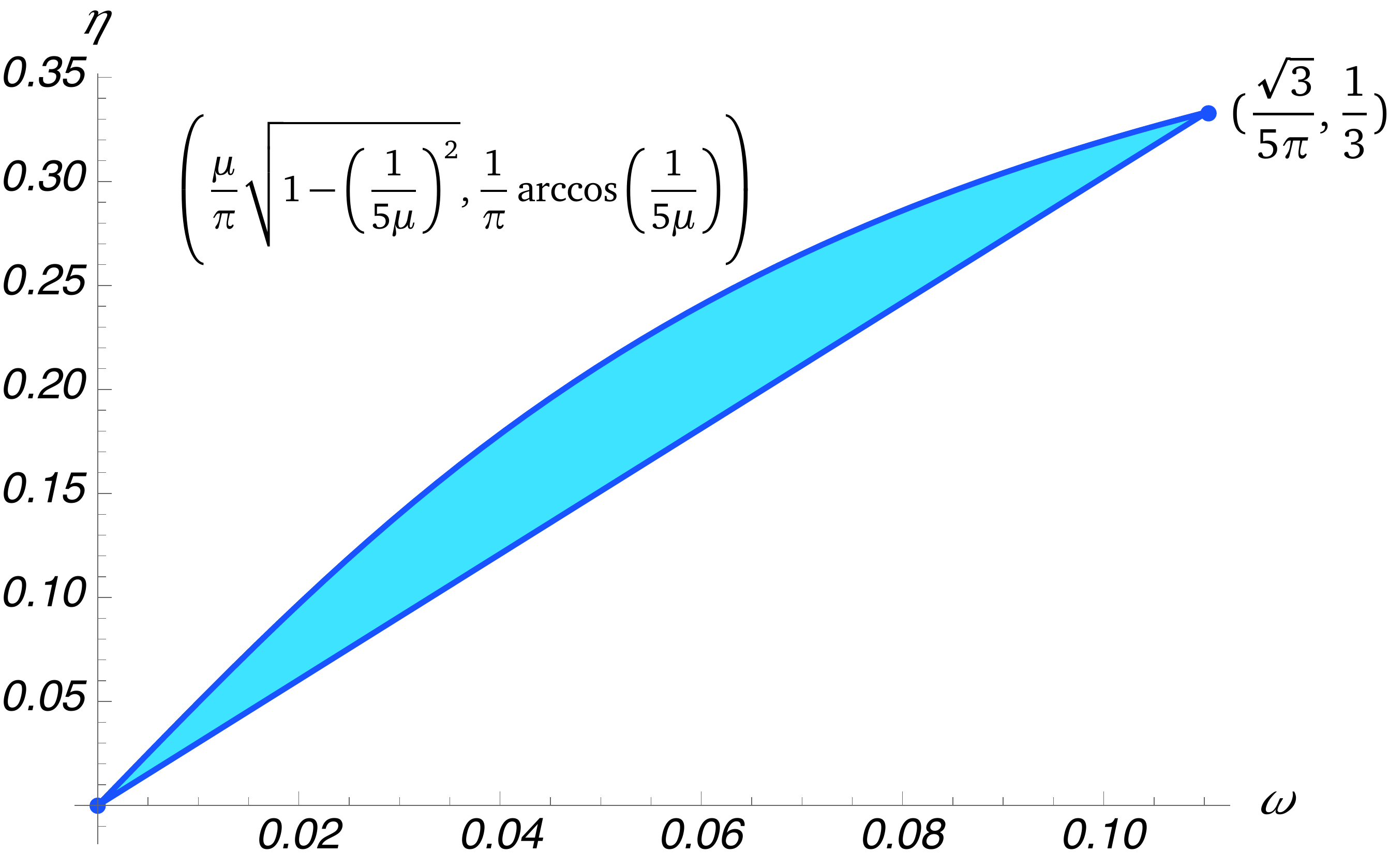}
  \caption{\textbf{Accessible values for the parameters $\omega$ and $\eta$ are given by the shaded region.}}
  \label{non_polytope}
\end{figure}

It rests to show that any point of the shaded region in Figure \ref{non_polytope} is achievable by a TI marginal. Since the set of TI marginals is convex, it is enough to see that one can achieve the extreme points of the set ${\cal S}$. Let $\frac{1}{5}\leq\mu\leq \frac{2}{5}$ and take any valid tiling $a_{Z^2}$ of the plane describing a vector $z\in\R^2$ with $\|z-\hat{c}\|_2=\mu$. Further take any increasing sequence of rectangles $m^{(k)}\times n^{(k)}$, with $\lim_{k\to\infty}n^{(k)}/m^{(k)}=0$. Symmetrizing the deterministic distribution $a_{m^{(k)}\times n^{(k)}}$, we obtain a sequence of TI marginals $(P^{(k)}_h,P^{(k)}_h)_k$ which violate (\ref{constance}) by an amount $O(1/n^{(k)})$. By Weiestrass' theorem, there exists a subsequence of this sequence that converges to a pair of distributions, call them $(P_h,P_v)$. By closure of the set of TI marginals, $(P_h,P_v)$ admits a 2D TI extension. Moreover, it satisfies (\ref{constance}) and $\omega,\eta$ are are given by eq. (\ref{curvas}).

As it is clear from Figure \ref{non_polytope}, the upper piece of ${\cal S}$'s boundary is smoothly curved. In other words: ${\cal S}$ is not a polytope, and so neither is the set of nearest-neighbor TI marginals. In order to show that it is also not a semi-algebraic set, see eq.  (\ref{basic_closed}), we will assume that it is and prove the result by contradiction. 

${\cal S}$ is the result of intersecting the set of TI marginals with a number of planes followed by a projection on the variables $\omega,\eta$: from our hypothesis, it follows that it is also a basic closed semi-algebraic set. Adding the variables $\nu,\mu$ and the relations $\mu^2-\frac{1}{25}-\pi^2\omega^2=0$, $\mu\geq \frac{1}{5}$, $\nu\geq 0$, $\mu\nu-\frac{1}{5}=0$, we have a new closed semi-algebraic set for the variables $(\omega,\eta,\mu,\nu)$. Its projection onto $(\nu,\eta)$ gives rise to a new semi-algebraic set ${\cal S}'$, with its boundary containing the curve $\{(\nu,\frac{\arccos(\nu)}{\pi}):\nu\in[\frac{1}{2},1]\}$. 

By the Tarski-Seidenberg projection theorem \cite{tarski,Bochnak1998}, ${\cal S}'$ is determined by a number of polynomials $\{G_i\}_{i=1}^u$, such that

\begin{align}
&(\nu,\eta)\in\tilde{{\cal S}'} \mbox{ iff }\nonumber\\
&G_i(\omega, \eta)\geq 0, i=1,...,u.
\label{basic_closed2}
\end{align}

Now, for any point $\hat{s}$ on the boundary of $\tilde{{\cal S}}'$ at least one of these polynomials must be null; otherwise, we could perturb $\hat{s}$ in any direction and the polynomial inequalities (\ref{basic_closed}) would still hold. It follows that any point $\hat{s}$ on the boundary of $\tilde{{\cal S}}'$ satisfies $g(\hat{s})=0$, where $g(\hat{s})\equiv\prod_{i=1}^uG_i(\hat{s})$. In particular, we have that

\be
g\left(\nu, \frac{\arccos(\nu)}{\pi}\right)=0,
\ee
\noindent for $\nu\in [1/2,1]$. This contradicts the fact that the inverse of cosine is a transcendental function.
\end{proof}

\section{Conclusion}

In this work, we have studied the problem of deciding whether a number of distributions correspond to the marginals of a 2D TI system, what we called the $\MARGINAL$ problem. We found that this problem is exactly solvable in scenarios of low local dimension and nearest or next-to-nearest neighbor statistics. We also showed that a natural variant of the problem, where we also demand symmetry under reflection, is solvable for all local dimensions. For other scenarios, we proposed a general algorithm to approximately solve the $\MARGINAL$ problem, as well as its dual, the $\ENERGY$ problem, where the goal is to minimize a linear functional of TI marginals.

We also proved several no-go theorems concerning these two problems. We showed that the $\ENERGY$ problem is undecidable in general, so we cannot expect to identify the sets of TI marginals exactly for arbitrary $d$. We find that for $d$ high enough, those sets are neither real polytopes nor semi-algebraic sets. Our techniques to prove negative results relied on a correspondence, proposed by Kari \cite{kari1,kari2}, between aperiodic tilings and immortal points of dynamical systems. Perhaps because of this, our upper bounds on the minimal dimension over which the set of TI marginals ceases to admit a simple description seem very poor. It is an open question how to lower those bounds. Could it be that, already for dimensions of order $10$, we can experience the transition from a rational polytope to a convex object where parts of the boundary are smoothly curved? And, could it be that, for dimensions small enough, the description of the sets ceases to be a polytope but nonetheless admits a practical description via semidefinite programming?

Finally, we hope our methods and results can be applied to the study of other thermodynamical quantities or out-of-equilibrium systems. An immediate follow-up to our work would be to relate TI marginals to the maximum entropy per site of the whole lattice configuration from which they originate. That way, we would be able to compute interesting thermodynamical quantities of TI systems, such as the free energy, at non-zero temperature. Such maximum entropy extensions of TI marginals have been studied on the 1D Euclidean lattice and the Bethe lattice~\cite{Goldstein2017}, but nothing is known when the lattice is 2D Euclidean. In this regard, our results on the uniqueness of the extendibility of the extreme points of $d=2$ TI marginals suggest that a full characterization of the set of achievable TI marginals plus maximum entropy in that scenario is on the horizon.

\ack{
M.N. acknowledges interesting and useful discussions with David P\'erez-Garc\'ia. The authors would like to thank Aernout van Enter for pointing out several related works.}

\dataccess{This work does not have any experimental data.}

\aucontribute{Both authors contributed equally to this work. Both authors gave final approval for publication.}

\competing{We have no competing interests.}

\funding{This work was supported by the FQXi grant ``The physics of events''.}

\bibliographystyle{unsrtnat}
\bibliography{biblio}

\end{document}